\newlength{\bibitemsep}\setlength{\bibitemsep}{.3\baselineskip plus .05\baselineskip minus .05\baselineskip}
\newlength{\bibparskip}\setlength{\bibparskip}{0pt}
\let\oldthebibliography\thebibliography
\renewcommand\thebibliography[1]{%
  \oldthebibliography{#1}%
  \setlength{\parskip}{\bibitemsep}%
  \setlength{\itemsep}{\bibparskip}%
}
\def\denseformat{
\setlength{\textheight}{9in}
\setlength{\textwidth}{6.9in}
\setlength{\evensidemargin}{-0.2in}
\setlength{\oddsidemargin}{-0.2in}
\setlength{\headsep}{10pt}
\setlength{\topmargin}{-0.3in}
\setlength{\columnsep}{0.375in}
\setlength{\itemsep}{0pt}
}
\newtheorem{theorem}{Theorem}[section]
\newtheorem{claim}[theorem]{Claim}
\newtheorem{lemma}[theorem]{Lemma}
\newtheorem{corollary}[theorem]{Corollary}
\newtheorem{observation}[theorem]{Observation}
\def\boldhead#1:{\par\vskip 7pt\noindent{\bf #1:}\hskip 10pt}
\def\ithead#1:{\par\vskip 7pt\noindent{\it #1:}\hskip 10pt}
\def\inline#1:{\par\vskip 7pt\noindent{\bf #1:}\hskip 10pt}
\def\midinline#1:{\par\noindent{\bf #1:}\hskip 10pt}
\def\dnsinline#1:{\par\vskip -7pt\noindent{\bf #1:}\hskip 10pt}
\def\ddnsinline#1:{\newline{\bf #1:}\hskip 10pt}
\def\largeinline#1:{\par\vskip 7pt\noindent{\large\bf #1:}\hskip 10pt}
\long\def\comment #1\commentend{}
\long\def\commhide #1\commhideend{}
\long\def\commfull #1\commend{#1}
\long\def\commabs #1\commenda{}
\long\def\commtim #1\commendt{#1}
\long\def\commb #1\commbend{}
\long\def\commedit #1\commeditend{} 
\long\def\commB #1\commBend{}       
\long\def\commex #1\commexend{}     
\long\def\commsiena #1\commsienaend{}  
\long\def\commBI #1\commBIend{}  
\long\def\CProof #1\CQED{}
\def\blackslug{\hbox{\hskip 1pt \vrule width 4pt height 8pt
    depth 1.5pt \hskip 1pt}}
\def\QED{\quad\blackslug\lower 8.5pt\null\par}
\long\def\PPP#1{\noindent{\bf Proof:}{ #1}{\quad\blackslug\lower 8.5pt\null}}
\long\def\denspar #1\densend
\newif\ifnotesw\noteswtrue
\ifnotesw\marginpar[\hfill\(\top\)]{\(\top\)}\fi}%
\ifnotesw\marginpar[\hfill\(\bot\)]{\(\bot\)}\fi}
\newcommand{\mnote}[1]%
    {\ifnotesw\marginpar%
        [{\scriptsize\it\begin{minipage}[t]{\marginparwidth}
        \raggedleft#1%
                        \end{minipage}}]%
        {\scriptsize\it\begin{minipage}[t]{\marginparwidth}
        \raggedright#1%
                        \end{minipage}}%
    \fi}
\def\cM{{\cal M}}
\def\cO{{\cal O}}
\def\MathF{\hbox{\rm I\kern-2pt F}}
\def\MathP{\hbox{\rm I\kern-2pt P}}
\def\MathR{\hbox{\rm I\kern-2pt R}}
\def\MathZ{\hbox{\sf Z\kern-4pt Z}}
\def\MathN{\hbox{\rm I\kern-2pt I\kern-3.1pt N}}
\def\MathC{\hbox{\rm \kern0.7pt\raise0.8pt\hbox{\footnotesize I}
\kern-4.2pt C}}
\def\MathQ{\hbox{\rm I\kern-6pt Q}}
\newsavebox{\ttop}\newsavebox{\bbot}
\def\eps{\epsilon}
\long\def\commabs #1\commabsend{}
\begin{document}

\title{Local Algorithms for Bounded Degree Sparsifiers in Sparse Graphs} 
\author{Shay Solomon \thanks{Tel Aviv University. This work was carried out while the author was a postdoc at IBM Research and supported by the IBM Herman Goldstine Postdoctoral Fellowship. E-mail: {\tt solo.shay@gmail.com}.}}

\date{\empty}

\begin{titlepage}
\def\thepage{}
\maketitle

\begin{abstract}
In graph sparsification, the goal has almost always been of \emph{global} nature: compress a graph into a smaller subgraph (\emph{sparsifier}) that maintains certain features of the original graph.
Algorithms can then run on the sparsifier, which in many cases leads to improvements in the overall runtime and memory.
This paper studies sparsifiers that have bounded (maximum) degree, and are thus \emph{locally} sparse, aiming to improve local measures of runtime and memory.
To improve those local measures, it is important to be able to compute such sparsifiers \emph{locally}.

We initiate the study of local algorithms for bounded degree sparsifiers in unweighted sparse graphs,
focusing  on the problems of vertex cover, matching, and independent set.
Let $\eps > 0$ be a slack parameter and $\alpha \ge 1$ be a density parameter.
We devise local algorithms for computing:
\begin{enumerate}
\item  A $(1+\eps)$-vertex cover sparsifier of degree $O(\alpha / \eps)$, for any graph of \emph{arboricity} $\alpha$.\footnote{In a graph of arboricity $\alpha$ the average degree of any induced subgraph is at most $2\alpha$.}
\item A $(1+\eps)$-maximum matching sparsifier and also a $(1+\eps)$-maximal matching sparsifier of degree $O(\alpha / \eps)$, for any graph of arboricity $\alpha$.
 \item A $(1+\eps)$-independent set sparsifier of degree $O(\alpha^2 / \eps)$, for any graph of average degree $\alpha$.
\end{enumerate}
Our algorithms require only a single communication round in the standard message passing models of distributed computing,
and moreover, they  can be simulated locally in a trivial way.
As an immediate application we can extend   results from distributed computing and local computation algorithms that apply to graphs of degree bounded by $d$ to graphs of arboricity $O(d / \eps)$ or average degree $O(d^2 / \eps)$,   at the expense of increasing the approximation guarantee by a factor of $(1+\eps)$. In particular, we can extend the plethora of recent local computation algorithms for approximate maximum and maximal matching from bounded degree graphs to bounded arboricity graphs with a negligible loss in the approximation guarantee.

The inherently local behavior of our algorithms can be used to amplify the approximation guarantee of any sparsifier in time roughly linear in its size,
which has immediate applications in the area of dynamic graph algorithms. In particular, the state-of-the-art algorithm for
maintaining $(2-\eps)$-vertex cover (VC) is at least linear in the graph size, even in dynamic forests. We provide a reduction from the dynamic to the static case,
showing that if a $t$-VC can be computed from scratch in time $T(n)$ in any (sub)family of graphs with arboricity bounded by $\alpha$, for an arbitrary $t \ge 1$,
then a $(t+\eps)$-VC can be maintained with update time $\frac{T(n)}{O((n / \alpha) \cdot \eps^2)}$, for any $\eps > 0$. For planar graphs this yields an algorithm for maintaining
a $(1+\eps)$-VC with constant update time for any constant $\eps > 0$.
\end{abstract}
\end{titlepage}

\section{Introduction}
Graph sparsification has been extensively studied for many years, and is subject to increasingly growing interest due to the rapidly growing necessity of dealing with huge-sized graphs.
Given such a graph $G = (V,E)$, we would like to \emph{compress} $G$ into a subgraph $H$ of much smaller size that maintains certain features of $G$, such as distances, cuts or flows. Algorithms can then run on the compressed subgraph $H$, sometimes called \emph{sparsifier}, rather than the original graph $G$, which may save significantly on important resources such as the overall runtime and memory of the algorithm, often at the expense of approximate rather than exact solutions or worse approximation guarantees.
The most common type of sparsifiers are \emph{edge sparsifiers}, such as graph spanners \cite{PU89}
and cut or spectral sparsifiers \cite{BK96,ST04}, which span the original vertex set using a small number of edges.
Another well-studied type of sparsifiers are \emph{vertex sparsifiers}, such as flow or cut sparsifiers \cite{HKNR98,Moitra09,LM10}, which should span a small number of designated vertices.

The basic goal in this area has almost always been of \emph{global} nature, i.e., of minimizing the overall size of the sparsifier and the overall time needed for computing it.
One of the exceptions is in the area of spanners, where researchers have studied spanners of bounded (maximum) \emph{degree}.
While a sparse spanner has a low average degree, and is thus globally sparse,
a bounded degree spanner has a low maximum degree, and is thus locally sparse.
Although bounded degree spanners have been little studied thus far in general graphs \cite{CDK12,CD14},
they have been studied   extensively in Euclidean low-dimensional spaces, see e.g., \cite{DHN93,ADMSS95,DN97,GLN02,ES15}).
The spanner degree often determines local memory constraints when using spanners to construct network synchronizers \cite{PU89} and efficient broadcast protocols \cite{ABP90,ABP91}.
In compact routing schemes (e.g., \cite{TZ01,Chechik13}), the use of low degree spanners may enable the routing tables to be of small size.
Moreover, viewing vertices as processors, in many applications the degree of a processor
represents its \emph{load}, hence a low degree spanner guarantees that the load on all the processors in the network will be low.

This paper studies sparsifiers from a \emph{local} perspective, aiming to improve local measures of runtime and memory.
To improve those local measures, it is important to be able to compute such sparsifiers \emph{locally}, in a manner to be defined shortly.
We initiate the study of local algorithms for \emph{bounded  degree} sparsifiers in unweighted \emph{sparse} graphs.
The graphs that we consider are sparse either globally, i.e., of bounded average degree, or uniformly, i.e., of bounded \emph{arboricity}, whence the average degree of any induced subgraph is bounded. In   sparse graphs some vertices may have   large degrees, as with the $n$-star graph.
Our basic goal is to compute \emph{locally} a sparsifier $H$ for the original graph $G = (V,E)$, whose maximum degree is bounded in terms of the density of $G$ and some slack parameter $\eps > 0$, and which approximately preserves a certain property or feature of the original graph; the sparsifier would ideally be a subgraph of $G$, but this cannot always be achieved.
Algorithms, particularly local ones, can then run on the bounded degree sparsifier $H$ rather than on the original graph $G$, which may save significantly on local resources of runtime and memory.
For concreteness, we focus  on the following combinatorial optimization problems: (approximate) minimum vertex cover (VC),
 maximum and maximal matching, and maximum independent set (IS). It would be only natural to extend the study of bounded degree sparsifiers to other fundamental problems.


For the maximum matching problem, a $(1+\eps)$-sparsifier for $G$ is a \emph{subgraph} $H = (V',E')$ of $G$, with $V' \subseteq V, E' \subseteq E$, such that the maximum matching size of $H$ is within a factor of $1+\eps$ from that of $G$; thus a $(1+\eps)$-(approximate) maximum matching for $H$ is a $(1+O(\eps))$-maximum matching for $G$.
For the maximal matching problem the definition is similar; see Section \ref{prel}.
We need to be more careful with the definitions of sparsifier for the minimum VC and maximum IS problems,
since a VC (respectively, IS) for a subgraph $H$ of $G$ may not be a \emph{valid} VC (resp., IS) for the entire graph $G$.
Note that we are concerned with the validity of solutions obtained by the sparsifier rather than the approximations that they provide.
Consequently, a sparsifier in these cases will not be simply a subgraph $H$ of $G$, but rather a pair $(H,V')$,
where $H$ is a subgraph of $G$ and $V'$ is a vertex set of $V$, hereafter the \emph{validating set} of the sparsifier, such that for any VC (resp., IS) for $H$,
adding (resp., removing) the validating set $V'$ to (resp., from) it provides a valid VC (resp., IS) for $G$;
the role of the validating set is to translate the solution obtained by the sparsifier into a valid solution for $G$.
We say that $(H,V')$ is a $(1+\eps)$-sparsifier for $G$ if for any $(1+\eps)$-(approximate minimum) VC (resp., (approximate maximum) IS) for $H$,
denoted by $C_H$ (resp., $I_H$), the set $C_H \cup V'$ is a valid $(1+O(\eps))$-VC (resp., $I_H \setminus V'$ is a $(1+O(\eps))$-IS) for $G$.


Given any  $\eps > 0$ and any density parameter $\alpha \ge 1$, we devise \emph{local} algorithms for computing:
\begin{enumerate}  
\item
A $(1+\eps)$-VC sparsifier of degree $O(\alpha / \eps)$, for any graph of \emph{arboricity} bounded by $\alpha$.
\item A $(1+\eps)$-maximum matching sparsifier and also a $(1+\eps)$-maximal matching sparsifier of degree $~~~~~O(\alpha / \eps)$, for any graph of arboricity bounded by $\alpha$.
\item A $(1+\eps)$-IS sparsifier of degree $O(\alpha^2 / \eps)$, for any graph of average degree bounded by $\alpha$.
\end{enumerate}

Aiming at enhancing the applicability and usefulness of our sparsifiers, we adhere to a strict notion of locality: For any vertex $v$, we
want to be able to compute the adjacent edges of $v$ that belong to the sparsifier by \emph{probing} only $v$ and a small number (bounded by the degree of the sparsifier) of its neighbors, where the probing procedure is context-dependent.
In standard centralized settings such a procedure will simply examine the data structures of $v$ and those neighbors,
but in the message passing models of distributed computing, for example, the procedure may trigger the exchange of messages between $v$ and  those neighbors.
Also, we want to determine if a vertex $v$ belongs to the validating set of the sparsifier by probing only $v$.
The advantage in using such a strict notion of locality is three-fold, as summarized here and described in more detail later on:
\begin{enumerate}
\item In the rapidly growing area of \emph{local computation algorithms} (see, e.g., \cite{RTVX11,ARVX12,EMR14,EMR15}), a standard assumption is that the underlying graph has bounded degree.
This assumption is required since a local computation algorithm would typically probe all vertices inside a small-radius ball around the queried vertex/edge.
If the maximum degree is $\Delta$ and the ball radius is $r$, the probe complexity is bounded by $\Delta^{O(r)}$, and sometimes the total runtime and space will also be bounded by $\Delta^{O(r)}$.
Due to the local nature of our sparsification algorithms, we can restrict the probing procedure only to the sparsifier edges,
which directly enables us to extend known results from bounded degree graphs to uniformly sparse graphs.
\item In dynamic centralized graph algorithms, following an update of a vertex/edge, the update algorithm would typically scan all   neighbors of the updated vertices (and usually more than just those vertices),
either to obtain up-to-date information from the   data structures of those neighbors or to update that information.
Since the adversary may choose to focus its attention on few high degree vertices, this could lead to algorithms with a   poor update time.
Using our notion of locality, we show that the attention can be restricted to only few edges of the sparsifier,
which leads to   improvements in the update time.
\item In distributed networks, we can compute the sparsifier in a single communication round. Moreover, since each vertex communicates with only few of its neighbors, the   \emph{load}  on all vertices (or processors) throughout the sparsifier's computation is low.
    After the sparsifier has been computed, running on it distributed algorithms rather than on the original network may significantly reduce the total runtime of the algorithms and the load on the processors.
\end{enumerate}
In addition to the above applications, our sparsification algorithms can be used more broadly in computational models where there are local memory constraints,
such as the distributed communication model and the massively parallel computation (MPC) model, which is an abstraction of MapReduce-style frameworks (cf.\ \cite{CLMMOS17,ABBMS17}).
Another relevant model is the dynamic distributed model  (cf.\ \cite{PPS16,CHK16}), where some graph structure (e.g.,  matching) is to be maintained in a dynamically changing distributed network using low local memory at processors. 
\subsection{Our sparsifiers} \label{s11}
Perhaps the most important feature of our sparsification algorithms is their   simplicity, which is partly why they can be computed under such a strict notion of locality.

For any   $\Delta \ge 1$,
let $V^\Delta_{high}$ and $V^\Delta_{low}$ be the sets of vertices of degree $\ge \Delta$ and $< \Delta$, respectively.
When $\Delta$ is clear from the context, we may omit it from the superscript.
For any vertex set $V'$  in $G$,   denote by $G[V']$ the subgraph induced by $V'$. Define $G_{high} = G[V_{high}]$ and $G_{low} = G[V_{low}]$.

\begin{enumerate}
\item For the minimum VC problem, we take the pair $(G_{low},V_{high})$ as the $(1+\eps)$-sparsifier for $G$,
where $G_{low}$ is a subgraph of $G$ and $V_{high}$ is the validating set of the sparsifier.
It is clear that the degree of $G_{low}$ is at most $\Delta$, and moreover, for any VC for $G_{low}$, its union with $V_{high}$ is a valid VC for $G$.
In Section \ref{s32} we show that for any graph of arboricity $\alpha$, taking $\Delta = O(\alpha / \eps)$ guarantees the following: If $VC_{low}$ is a $(1+\eps)$-VC for $G_{low}$, then $VC_{low} \cup V_{high}$ is a $(1+O(\eps))$-VC for $G$.

\item For the maximum and maximal matching problems, a (subgraph) $(1+\eps)$-sparsifier $G_\Delta$ for $G$ with degree bounded by $\Delta$ can be obtained as follows:
Mark up to $\Delta$ arbitrary adjacent edges on every vertex $v$, and add to $G_\Delta$ all edges that are marked  by both endpoints.
It is clear that the degree of $G_\Delta$ is at most $\Delta$. (Note that if we took to $G_\Delta$ edges that are marked just once, the degree of $G_\Delta$ could explode.)
In Section \ref{s31} we show that for any graph of arboricity $\alpha$, taking $\Delta = O(\alpha / \eps)$ guarantees that the subgraph $G_\Delta$ is a $(1+\eps)$-sparsifier for $G$.


\item For the maximum IS problem, we take  $G_{low}$  as the $(1+\eps)$-sparsifier for $G$. (Although we may also use a validating set for the sparsifier, there is no need to do that here; thus in this case the IS sparsifier is a subgraph of $G$.)
It is clear that the degree of $G_{low}$ is at most $\Delta$, and moreover, any IS  for $G_{low}$ is a valid IS for $G$.
In Section \ref{s33} we show that for any graph of average degree $\alpha$, taking $\Delta = O(\alpha^2 / \eps)$ guarantees that
any $(1+\eps)$-IS for $G_{low}$ is   a $(1+O(\eps))$-IS for $G$.
\end{enumerate}
Note that our sparsifiers are obtained by essentially  ``ignoring'' the high degree vertices, where what is meant by ignoring is context-dependent.
For the minimum VC and maximum IS problems, we take all high degree vertices to the VC and
take none of them to the IS, respectively,
whereas for the maximum and maximal matching problems, we ignore all but at most $\Delta$ edges adjacent on any high degree vertex.
This approach of ignoring the high degree vertices  
can be viewed as a general paradigm, and it would be interesting to apply it to additional fundamental graph problems.

\subsection{Local computation algorithms} \label{s12}
The model of \emph{local computation algorithms} was introduced by Rubinfeld et al.\ \cite{RTVX11}, motivated by the fact that it is prohibitively expensive and sometimes   infeasible for an algorithm to read and process the entire input as well as to report the entire output, when dealing with massive data sets.
Local computation algorithms should answer \emph{queries} regarding global solutions to computational problems by examining only a small part of the input.
The goal is   to reach a global solution by performing local (sublinear time) computations on the input, and answer only regarding the queried part of the output.
If there are multiple possible solutions, the answers to all queries must be consistent with a single solution.
(More technical details on this model are given in Section \ref{prel}; see also \cite{RTVX11}.)
For the $(1+\eps)$-maximum matching problem, each query is an edge in the graph, and the algorithm needs to answer whether the queried edge belongs
to a $(1+\eps)$-maximum matching; note that the answers to all queries must be with respect to the same matching.
\cite{MV13} devised a randomized local computation algorithm for $(1+\eps)$-maximum matching with time and space complexities
$\poly(\log n) \cdot \exp(\Delta)$, where $\Delta$ is the maximum degree of the graph. This result was improved in  \cite{EMR14} to a deterministic algorithm with
time complexity $O(\log^* n) \cdot \exp(\Delta)$ and zero space complexity.
\cite{LRY17} devised a randomized algorithms with time and space complexities of
$\poly(\log n,\Delta)$.  \cite{Fischer17} obtained a deterministic algorithm for $(2+\eps)$-maximum matching with time complexity
$O(\log^* n) \cdot 2^{O(\Delta^2)}$.
(We ignore the dependencies on $\eps$ in the results of \cite{MV13,EMR14,LRY17,Fischer17}; in fact, in some of these results it is assumed that $\eps$ is constant.)

We can extend the results of \cite{MV13,EMR14,LRY17,Fischer17} from graphs of bounded degree to graphs of bounded arboricity.
Specifically, for any graph with arboricity bounded by $\alpha$, our matching sparsifier $G_\Delta$ has a degree bounded by $\Delta = O(\alpha/\eps)$.
We get this extension by exploiting the local nature of $G_\Delta$, and in particular,
the fact that for any vertex $v$, we can compute the adjacent edges of $v$ that belong to $G_\Delta$ by probing only $v$ and at most $\Delta$ of its neighbors.
Any $(1+\eps)$-maximum matching computed for the sparsifier provides a $(1+\eps)^2 = (1+O(\eps))$-maximum matching for the original graph,
thus there is only a negligible loss in the approximation guarantee.
Since $\Delta = O(\alpha/\eps)$, the smaller $\eps$ is, the larger the time and space complexities get. Nonetheless, as long as $\eps$ is not too small, the loss here is quite negligible too.
In this way reduce the problem of approximate maximum matching from graphs of arboricity bounded by $\alpha$ to graphs of degree bounded by $\approx \alpha$.  

In the same way we reduce the problems of approximate minimum VC and maximum IS from bounded arboricity graphs
and graphs of bounded average degree, respectively, to bounded degree graphs.
These reductions show that if and when local computation algorithms for these problems
are developed in bounded degree graphs (there are currently no such algorithms),
 they will immediately give rise to new algorithms in the respective wider families.
Moreover, this can be viewed as a general paradigm: By locally computing a sparsifier for a combinatorial optimization problem in some family of graphs,
we reduce the problem from that family to the family of bounded degree graphs, and the loss depends on the approximation guarantee of the sparsifier and on its degree.
\subsection{Dynamic centralized graph algorithms} \label{s13}
The problems of dynamically maintaining approximate minimum VC and maximum   matching have been intensively studied in recent years, see e.g.\
\cite{OR10,BGS11,NS13,GP13,BHI15,BS16,BHN16}.
The holy grail is for the approximation guarantee to approach 1 and for the (amortized or worst-case) update time   to be $\poly(\log n)$ and ideally a constant.

A dynamic algorithm for approximate VC (respectively, matching) should maintain a data structure
that answers queries of whether a vertex is in the VC (resp., an edge is matched) or not in \emph{constant} time.
Constant query time is considered a standard requirement in this line of research, and the goal is to optimize the update time of the algorithm under this requirement.
Almost all related works follow another requirement, of bounding the number of \emph{changes} to the maintained structure per step, either in the amortized or in the worst-case sense. The update time of the algorithm, which is the time it needs to update the data structure,
may be significantly lower than, and is    bounded by, the number of changes to the maintained structure;
for a motivation of this requirement, refer to \cite{BLSZ14,ADKKP16}.  
It is easy to see  that maintaining an exact minimum VC or a maximum matching requires $\Omega(n)$ changes per update even in the amortized sense, and even
for a simple path that changes dynamically in a straightforward way.  

Except for general graphs, these  problems have been studied mostly in bounded arboricity graphs \cite{NS13,KKPS14,HTZ14,BS15,BS16,PS16}.
It was shown in \cite{NS13} that a maximal matching can be maintained with amortized time $O(\log n / \log\log n)$ in constant arboricity graphs,
and this bound was improved in \cite{HTZ14}  to $O(\sqrt{\log n})$.
\cite{KKPS14} achieved a worst-case update time of $O(\log n)$.
The algorithms of \cite{NS13,HTZ14,KKPS14}   extend to graphs with arboricity bounded by $\alpha$, with the update time depending on $\alpha$.
A randomized algorithm for maintaining a maximal matching in \emph{general graphs} with constant amortized update time was given in \cite{Sol16}.
A maximal matching provides a 2-approximation for both the  maximum matching and the minimum VC.

What about better-than-2 approximations?
Improving upon \cite{BS15,BS16} and providing essentially the best result one can hope for in graphs of arboricity  $\alpha$,
 \cite{PS16} showed that a $(1+\eps)$-maximum matching can be maintained with a worst-case update time of $O(\alpha)$.
The $O(\alpha)$ bound in \cite{PS16} also bounds the number of changes to the matching,
and as mentioned it is impossible to maintain an exact matching with $o(n)$ matching changes even for a dynamic path.
In addition, \cite{PS16} showed that a $(2+\eps)$-VC can be maintained with a worst-case update time of $O(\alpha)$.
This improves the update time of the 2-VC algorithms of \cite{NS13,HTZ14,KKPS14} in every aspect, at the expense of increasing the approximation guarantee from 2 to $(2+\eps)$.

Note that in general graphs, a better-than-2 approximation to the minimum VC cannot be maintained efficiently under the unique games conjecture \cite{KR08}.
Although this hardness result does not apply to bounded arboricity graphs, there is currently no dynamic algorithm for maintaining a better-than-2 approximate VC with
update time $o(n)$ even in the amortized sense, and even in dynamic forests!\footnote{The only exception is an algorithm for maintaining a maximum matching
in dynamic forests with a worst-case update time of $O(\log n)$ \cite{GS09}.
As a result one can maintain the \emph{size} of the minimum VC (by Konig's theorem) in logarithmic update time.
On the negative side, one cannot efficiently maintain the VC itself or even a poor approximation of it  using \cite{GS09},
and more importantly, the result of \cite{GS09} requires a logarithmic query time, hence it does not follow the standard constant query time requirement.
(We believe that \cite{GS09} is the only paper in this line of research that does not follow this requirement.)}
In fact, the only known way to maintain a better-than-2 VC dynamically is to apply the fastest static algorithm from scratch following every update step.

The   local nature of our sparsification algorithms can be used to amplify the approximation guarantee of our VC sparsifier in time roughly linear in its size.
As a corollary, we provide a reduction from the dynamic to the static case,
showing that if a $t$-VC can be computed from scratch in time $T(n)$ in any (sub)family of graphs with arboricity bounded by $\alpha$, for any $t \ge 1$,
then a $(t+\eps)$-VC can be maintained with a worst-case update time of $\frac{T(n)}{O((n / \alpha) \cdot \eps^2)}$. This 
bound of $\frac{T(n)}{O((n / \alpha) \cdot \eps^2)}$ also bounds the amortized number of changes to the VC.
For planar graphs this yields an algorithm for maintaining an $(1+\eps)$-VC with a constant worst-case update time for any constant $\eps > 0$,
which is essentially the best one can hope for. For graphs of arboricity bounded by $\alpha$ we can maintain a VC of
approximation guarantee $\approx 2 - \frac{1}{\alpha}$ with a worst-case update time of $O(\sqrt{n} \cdot \alpha^2)$.

We can also amplify our matching and IS sparsifiers and obtain reductions from the dynamic to the static case.
These reductions are not useful to obtain new time bounds for the maximum matching and IS problems. In particular, for approximate matchings, the result of \cite{PS16} is already the best one can hope for;  
nevertheless, our reduction for the maximum matching problem can be used to obtain simpler and cleaner algorithms and arguments than those of \cite{PS16}, as discussed in Section \ref{s42}.
\subsection{Distributed networks} \label{s14}
Our sparsification algorithms can be implemented within a single communication round in distributed networks, where each processor sends and receives  a single $O(1)$-bit message along each of its adjacent edges.
Moreover, if $\Delta$ is the maximum degree of the sparsifier,   each processor may send messages along just $\Delta$ of its adjacent edges,
which ensures that the   \emph{load} on all the processors  will be low throughout the sparsifier's computation.
After the sparsifier has been computed, we can run on it the required distributed algorithm rather than on the original network,
which may significantly reduce the total runtime of the algorithm, the load on the processors, and in some settings it may also reduce the local memory usage at a processor.

Our distributed sparsification algorithms directly extend results from bounded degree graphs to bounded arboricity graphs or to graphs of bounded average degree,
for all the problems studied in this paper.
Since the performance of many distributed algorithm depend on the maximum degree of the underlying network
and as our sparsification algorithms are extremely simple, we anticipate that they will be used and implemented in practice.  

For the distributed approximate VC problem, \cite{BCS16} showed how to compute a $(2+\eps)$-VC in $O(\log \Delta / (\eps \log\log \Delta))$ rounds,
where $\Delta$ is the maximum degree in the graph.
We can plug our reduction to extend the result of \cite{BCS16} to graphs of arboricity bounded by $\alpha$, 
getting a $(2+\eps)$-VC in $O(\log (\alpha / \eps) / (\eps \log \log (\alpha / \eps)))$ rounds.

For the distributed approximate matching problem, a reduction from bounded arboricity graphs to bounded degree graphs was already given in \cite{CHS09}.
Nonetheless, our reduction has several advantages over that of \cite{CHS09} (see Section \ref{app}), one of which is that it is much simpler,
another is that our degree bound has better dependence on $\eps$.
In particular, \cite{EMR15} devised a distributed algorithm for computing a $(1+\eps)$-maximum matching in $\Delta^{O(1/\eps)} + O(\eps^{-2}) \cdot \log^* n$
rounds. Plugging our reduction (instead of that from \cite{CHS09}), we easily extend the result of \cite{EMR15} to graphs of arboricity bounded by $\alpha$
to get a $(1+\eps)$-maximum matching in $(\alpha / \eps)^{O(1/\eps)} + O(\eps^{-2}) \cdot \log^* n$ rounds.

A reduction from bounded arboricity graphs to bounded degree graphs was given in \cite{BEPS16} for the problems of maximal matching,
maximal IS, vertex coloring and ruling sets. The reduction of \cite{BEPS16} is based on different ideas than ours
(their algorithm is randomized, the number of rounds required by their algorithm is polylogarithmic in the maximum degree, etc),
and moreover, it appears that the reduction of \cite{BEPS16} cannot be efficiently applied to the problems studied in this paper.


\subsection{Subsequent work}
This paper appeared in the proceedings of ITCS 2018 \cite{Sol18}, and it triggered
further work in the area. We next survey the most relevant follow-up papers \cite{SS21,KS21,BDF+19,MS20}.

Kaplan and Solomon \cite{KS21} studied dynamic representations of distributed networks of bounded arboricity,
and they provided the first representation of distributed networks in which the local memory usage at all vertices is bounded by the arboricity rather than the maximum degree.
Among other results, they showed that the matching sparsifier presented in the current work 
can be maintained dynamically in a distributed network using low local memory usage; consequently, \cite{KS21} achieves efficient distributed algorithms for maintaining approximate matching and vertex cover 
in distributed networks of  bounded arboricity with low amortized update time and message complexities and with low local memory usage.

Behnezhad et al.\ \cite{BDF+19} studied the {\em stochastic matching problem}, where one is given a graph $G$ where each edge is realized independently with some constant probability $p$, and the goal is to find a constant degree subgraph of $G$ whose expected realized matching size is close to that of $G$.
This model of stochastic matching has received growing attention over the last few years due to its 
applications in kidney exchange and recommendation systems.
\cite{BDF+19} used the matching sparsifier presented in the current work for the stochastic setting.
More specifically, the analysis of the current work carries over to the stochastic setting without any change if $p = 1$, and \cite{BDF+19} showed that with few changes, the same analysis can be used to show that this matching sparsifier also works in the stochastic setting when $p < 1$.

Milenković and Solomon \cite{MS20} presented a matching sparsification algorithm in
graphs of bounded {\em neighborhood independence}.
The {\em neighborhood independence number} of a graph $G$, denoted by
$\beta = \beta(G)$, is the size of the largest independent set in the neighborhood of any vertex. 
Graphs with bounded neighborhood independence, even for constant $\beta$, form a wide family of possibly dense graphs, including line graphs, unit-disk graphs and graphs of bounded growth.
The matching sparsifier of \cite{MS20} is reminiscent of the matching sparsifier given in this work, with one key difference: it crucially relies on randomization. 
Set $\Delta = O(\frac{\beta}{\eps}\log\frac{1}{\eps})$ and let $G_\Delta$ be a random subgraph of $G$ that includes, for each vertex $v \in G$, $\Delta$ random edges incident on it;
\cite{MS20} showed that, with high probability, $G_\Delta$ is a {$(1+\eps)$-maximum matching sparsifier} for $G$.
Although the maximum degree of $G_\Delta$ can be significantly larger than $\Delta$, it is easy to see that the arboricity of $G_\Delta$ is at most $2\Delta$. Thus, one can apply the bounded degree $(1+\eps)$-maximum matching sparsifier of the current work on top of it to achieve a $((1+\eps)^2)$-maximum matching sparsification for the original graph $G$. This two-step sparsification approach, which can be implemented in two rounds of communication, was employed by \cite{MS20} to achieve a fast distributed algorithm for computing a $(1+\eps)$-maximum matching
in bounded neighborhood independence graphs.

The update time bounds of all our dynamic algorithms hold in the worst-case.
However, the number of {\em changes} per update step to the maintained structure (also known as the {\em recourse bound}) may be as large as $\Omega(n)$ in the worst-case. 
This is a rather common drawback of dynamic graph algorithms, where, for certain problems, the best worst-case recourse bounds are much larger than the corresponding update times.
In particular, this was the case for matchings, but the issue for matchings was remedied via a general black-box reduction by Solomon and Solomon \cite{SS21}, which shows that any dynamic algorithm for maintaining a $\gamma$-maximum matching with update time $T$, for any $\gamma \ge 1, T$ and $\eps > 0$, can be transformed into an algorithm for maintaining a $(\gamma(1 + \eps))$-maximum matching with update time $T + O(1/\eps)$ and a worst-case recourse bound of $O(1/\eps)$.

\subsection{Organization} \label{s15}
The definitions and notation used throughout are given in Section \ref{prel}.
Our matching, VC and IS sparsifiers are presented in Sections \ref{s31}, \ref{s32} and \ref{s33}, respectively.
Finally, in Section \ref{app} we provide some applications of our sparsifiers.

\section{Preliminaries} \label{prel}
Consider an unweighted graph $G = (V,E)$.
A matching $M$ for $G$ is said to be \emph{almost-maximal} w.r.t.\ some parameter $\eta> 0$, or  \emph{$\eta$-maximal}, if at most $\eta \cdot |M|$ edges can be added to it
while keeping it a valid matching for $G$. 
A $(1+\eps)$-maximal matching sparsifier for $G$ is a subgraph $H$ of $G$, such that any $\eta$-maximal matching for $H$ is an $(\eps+O(\eta))$-maximal matching for $G$;
in particular, a maximal matching for $H$ is   $\eps$-maximal   for $G$, and a $\eps$-maximal matching for $H$ is $O(\eps)$-maximal for $G$.

For a vertex $v$ in $G$, let $\Gamma_G(v)$ denote the set of neighbors (or \emph{neighborhood}) of $v$ in $G$.
For any vertex set $V' \subseteq V$  in $G$,   denote by $G[V']$ the subgraph induced by $V'$.

A graph $G=(V,E)$ has \emph{arboricity} $\alpha$ if $\alpha=\max_{U\subseteq V}\left\lceil\frac{|E(U)|}{|U|-1}\right\rceil$,
where $E(U)=\left\{(u,v)\in E\mid u,v\in U\right\}$. Alternatively, the arboricity of a graph is the minimum number of edge-disjoint forests into which it can be partitioned.
The family of bounded arboricity graphs
contains planar and bounded genus graphs, bounded tree-width graphs, and in general all graphs excluding fixed minors.

As mentioned in Section \ref{s11},  for any   $\Delta \ge 1$,
we write $V^\Delta_{high}$ and $V^\Delta_{low}$ to denote the sets of vertices of degree $\ge \Delta$ and $< \Delta$, respectively,
omitting $\Delta$ from the superscript when it is clear from the context. Define $G_{high} = G[V_{high}]$ and $G_{low} = G[V_{low}]$.





\section{Our Sparsifiers}

Note that a graph with arboricity $\alpha$ has an average degree at most $2\alpha$.
Throughout we use $\alpha$ as an arboricity parameter and $\beta$ as an average degree parameter.
The next observation will be useful.
\begin{observation} \label{basic}
Let $G = (V_1 \cup V_2, E)$ be a graph with average degree bounded by $\beta$, and suppose that each vertex of $V_1$ has degree at least $(c + 1) \beta$,
 for any $c$.
Then $|V_1| \le  |V_2|/c$.
\end{observation}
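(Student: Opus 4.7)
The plan is a simple double-counting argument based on the handshake lemma, comparing a lower bound on the degree sum restricted to $V_1$ with an upper bound on the total degree sum coming from the average degree assumption.

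First I would invoke the hypothesis on the average degree: since $G$ has $|V_1|+|V_2|$ vertices and average degree at most $\beta$, the handshake lemma gives
\[
\sum_{v \in V_1 \cup V_2} \deg_G(v) \;=\; 2|E| \;\le\; \beta \cdot (|V_1|+|V_2|).
\]
Next I would bound the left-hand side from below using only the contribution of $V_1$: since every vertex of $V_1$ has degree at least $(c+1)\beta$,
\[
\sum_{v \in V_1} \deg_G(v) \;\ge\; (c+1)\beta \cdot |V_1|,
\]
and trivially $\sum_{v \in V_1} \deg_G(v) \le \sum_{v \in V_1 \cup V_2} \deg_G(v)$.

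Chaining the two inequalities yields $(c+1)\beta |V_1| \le \beta(|V_1|+|V_2|)$; dividing by $\beta$ and rearranging gives $c |V_1| \le |V_2|$, i.e.\ $|V_1| \le |V_2|/c$, as required. There is no real obstacle here — the only mild subtlety is that one must not mistakenly double-count edges internal to $V_1$, but since the argument only uses the global handshake identity together with a subset bound on the degree sum over $V_1$, this issue does not arise.
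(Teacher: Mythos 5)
Your proof is correct and follows essentially the same double-counting argument as the paper: bound $2|E|$ from above via the average-degree hypothesis, bound it from below by the degree sum over $V_1$, and chain. The only difference is that you spell out the intermediate step $\sum_{v \in V_1} \deg_G(v) \le 2|E|$ explicitly, which the paper leaves implicit.
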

\begin{proof}
Observe that $2|E| \le \beta (|V_1| + |V_2|)$.
Since every vertex in $U$ has degree at least $(c+ 1)  \beta$, we have $2|E| \ge |V_1| \cdot (c + 1)  \beta$,
hence $|V_1| \cdot (c + 1) \beta \le \beta (|V_1| + |V_2|)$, and so $|V_1| \le |V_2|/c$.
\end{proof}
\subsection{The matching sparsifier}  \label{s31}
Let $G$ be a graph of arboricity bounded by $\alpha$, set $\Delta = 5(5/\eps + 1) 2\alpha$, and define the sets $V_{high}, V_{low}$ and the subgraphs $G_{low}, G_{high}$ accordingly.
We assume that $\eps \le 1$; the argument works  also for larger  $\eps$, by increasing $\Delta$ appropriately.
(We did not try to optimize the constants in the definition of $\Delta$.)

Recall our definition of the  matching sparsifier $G_\Delta$:
Mark up to $\Delta$ arbitrary adjacent edges on every vertex $v$, and add to $G_\Delta$ all edges that are marked  by both endpoints.
It is clear that the degree of $G_\Delta$ is at most $\Delta$.
To prove that $G_\Delta$ is a matching sparsifier, we use Hall's marriage theorem.
\begin{theorem} [Hall's marriage theorem \cite{Hall35}] \label{hall}
Let $G$ be a bipartite graph with sides $X$ and $Y$.
There is a matching that entirely covers $X$ if and only if for every subset $W$ of $X$, $|W| \leq |\Gamma_G(W)|$,
where $\Gamma_{G}(W) = \bigcup_{v \in W} \Gamma_{G}(v)$ is the \emph{neighborhood} of $W$.  
\end{theorem}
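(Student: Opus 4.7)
The plan is to prove Hall's marriage theorem by induction on $|X|$. The necessity direction is immediate: given a matching $M$ saturating $X$, each $x \in W$ is matched to some distinct $y_x \in Y$, and since $y_x \in \Gamma_G(x) \subseteq \Gamma_G(W)$, the $|W|$ distinct vertices $\{y_x : x \in W\}$ witness $|\Gamma_G(W)| \ge |W|$.

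For sufficiency, I would induct on $|X|$. The base cases $|X| \in \{0,1\}$ are trivial (for $|X|=1$, the hypothesis forces $|\Gamma_G(x)| \ge 1$, providing a match). For the inductive step, I would split into two cases according to whether Hall's condition is ever tight on a non-empty proper subset of $X$.

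In the first (non-tight) case, every non-empty $W \subsetneq X$ satisfies $|\Gamma_G(W)| \ge |W| + 1$. Pick any $x \in X$ and any $y \in \Gamma_G(x)$, match them, and delete both vertices. In the residual bipartite graph, for every $W' \subseteq X \setminus \{x\}$ the removal of $y$ decreases $|\Gamma_G(W')|$ by at most one, so the new neighborhood has size at least $|W'| + 1 - 1 = |W'|$; Hall's condition is preserved and the inductive hypothesis applied to $X \setminus \{x\}$ completes the matching. In the second (tight) case, some non-empty proper $W \subsetneq X$ satisfies $|\Gamma_G(W)| = |W|$. I would first apply induction to the sub-bipartite graph on $W \cup \Gamma_G(W)$, which inherits Hall's condition verbatim since $\Gamma_G(W') \subseteq \Gamma_G(W)$ for every $W' \subseteq W$; this yields a matching $M_1$ saturating $W$. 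I would then apply induction to the sub-bipartite graph $G'$ on $(X \setminus W) \cup (Y \setminus \Gamma_G(W))$. To verify Hall's condition for $G'$, let $W' \subseteq X \setminus W$; the identity $\Gamma_G(W \cup W') = \Gamma_G(W) \cup \Gamma_G(W')$ together with the hypothesis $|\Gamma_G(W \cup W')| \ge |W| + |W'|$ and $|\Gamma_G(W)| = |W|$ yields $|\Gamma_G(W') \setminus \Gamma_G(W)| \ge |W'|$, which is exactly the required bound $|\Gamma_{G'}(W')| \ge |W'|$. Combining $M_1$ with the matching produced by induction on $G'$ saturates all of $X$.

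The delicate step is the tight case, where one must set up the partition of $X$ and of $Y$ so that Hall's condition is preserved on both sides simultaneously; the existence of a tight $W$ is precisely what makes the partition $Y = \Gamma_G(W) \sqcup (Y \setminus \Gamma_G(W))$ work, since any neighbor of $X \setminus W$ not already in $\Gamma_G(W)$ must be accounted for by the surplus in $|\Gamma_G(W \cup W')|$. Once this verification is in place, both inductive invocations apply to strictly smaller ground sets and the two resulting matchings are vertex-disjoint by construction, yielding the desired matching of $X$.
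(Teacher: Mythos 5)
Your proof is correct. Note, however, that the paper does not actually prove Hall's marriage theorem; it states it as Theorem~\ref{hall} and cites Hall's original 1935 paper \cite{Hall35}, treating it as a known classical result used as a black box in the proof of Claim~\ref{safe}. What you have written is the standard Halmos--Vaughan induction on $|X|$ (splitting on whether Hall's condition is tight on some nonempty proper subset), and every step checks out: the non-tight case correctly uses the surplus of $1$ to absorb the deletion of a single $Y$-vertex, and the tight case correctly verifies Hall's condition on the residual graph $(X\setminus W)\cup(Y\setminus\Gamma_G(W))$ via the inclusion-exclusion identity $|\Gamma_G(W\cup W')| = |\Gamma_G(W)| + |\Gamma_G(W')\setminus\Gamma_G(W)|$ combined with $|\Gamma_G(W)|=|W|$. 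Since the two inductive invocations act on vertex-disjoint subgraphs, their union is a matching saturating $X$. So there is nothing to compare against in the paper itself, but as a self-contained proof of the cited theorem this is sound and is the textbook argument.
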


The following theorem shows that $G_\Delta$ is a $(1+\eps)$-maximum matching sparsifier.
\begin{theorem} \label{core}
Let $G$ be a graph of arboricity bounded by $\alpha$ and define $G_\Delta$ as above, for $\Delta = 5(5/\eps + 1) 2\alpha,\eps \le 1$.
Also, denote by $\cM^*$ and $\cM^*_{\Delta}$ the maximum matchings for $G$ and $G_{\Delta}$, respectively.
Then $|\cM^*| \le (1+\eps) \cdot |\cM^*_{\Delta}|$.
(In particular, any $t$-matching for $G_\Delta$ is a $t(1+\eps)$-matching for $G$.)
\end{theorem}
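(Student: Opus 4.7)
Let $M := \cM^*$ denote a maximum matching of $G$ and let $U := V \setminus V(M)$; since $M$ is maximum, $U$ is an independent set. Set $V^* := V_{high} \cap V(M)$, and let $M_{LL} \subseteq M$ be the sub-matching of $M$-edges with both endpoints in $V_{low}$. Note $M_{LL} \subseteq E(G_\Delta)$ automatically, since every low-degree vertex marks all of its incident edges. My plan is to exhibit a matching $\tilde M \subseteq E(G_\Delta)$ of size at least $|M|/(1+\eps)$, by taking $\tilde M = M_{LL} \cup N$, where $N$ is a matching from $V^*$ into $V_{low} \setminus V(M_{LL})$ that lies entirely in $E(G_\Delta)$ and covers all but $O(\eps|M|)$ vertices of $V^*$. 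Since $V(N) \cap V(M_{LL}) = \emptyset$ by construction, the union is automatically a valid matching. I will produce $N$ via Hall's marriage theorem.

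The crucial preliminary step is a sharp bound on $|V_{high}|$. Any $v \in V_{high} \cap U$ has all of its $\ge \Delta$ neighbors in $V(M)$ (independence of $U$), so applying the arboricity bound to $G[(V_{high} \cap U) \cup V(M)]$ yields
\[
\Delta \cdot |V_{high} \cap U| \;\le\; |E_G(V_{high} \cap U, V(M))| \;\le\; \alpha\bigl(|V_{high} \cap U| + 2|M|\bigr),
\]
so $|V_{high} \cap U| \le 2\alpha|M|/(\Delta-\alpha) = O(\eps|M|)$ for $\Delta = 5(5/\eps+1)\cdot 2\alpha$. Hence $|V_{high}| = |V^*| + |V_{high} \cap U| \le |V^*| + O(\eps|M|)$. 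This strict improvement over the naive bound $|V_{high}| \le 2\alpha|V|/\Delta = O(\eps|V|)$ is what lets the argument go through when $|M| \ll |V|$.

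For the Hall step, let $H$ be the bipartite graph with left side $V^*$, right side $V_{low} \setminus V(M_{LL})$, and an edge $(v,u)$ iff $v \in V^*$ marked $u$; all such edges lie in $E(G_\Delta)$ because $u$ is low-degree and therefore also marks $v$. For $W \subseteq V^*$ and $T := \Gamma_H(W)$, write $d^H(v) := |N_G(v) \cap V_{high}|$. Since each $v \in W$ marked $\Delta$ neighbors, of which at most $d^H(v)$ are high-degree and at most $|N_G(v) \cap V(M_{LL})|$ lie in $V(M_{LL})$, we have $\deg_H(v) \ge \Delta - d^H(v) - |N_G(v) \cap V(M_{LL})|$. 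Three applications of arboricity — to $G[V_{high}]$ (giving $\sum_{v \in V_{high}} d^H(v) \le 2\alpha|V_{high}|$), to $G[W \cup V(M_{LL})]$ (giving $|E_G(W, V(M_{LL}))| \le \alpha(|W|+2|M_{LL}|)$), and to $G[W \cup T]$ (giving $\sum_{v\in W}\deg_H(v) \le \alpha(|W|+|T|)$) — combine to yield
\[
|T| \;\ge\; \bigl(\tfrac{\Delta}{\alpha}-3\bigr)|W| \;-\; 2|V_{high}| \;-\; 2|M_{LL}|.
\]
Plugging in $\Delta/\alpha \ge 50/\eps$, the bound on $|V_{high}|$, and $|V^*| + |M_{LL}| \le 2|M|$, one verifies that the defect $\max_W(|W| - |T|)$ is $O(\eps|M|)$, so defect Hall's produces a matching $N \subseteq E(H) \subseteq E(G_\Delta)$ of size $|N| \ge |V^*| - O(\eps|M|)$.

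Putting the pieces together,
\[
|\tilde M| = |M_{LL}| + |N| \;\ge\; |M_{LL}| + |V^*| - O(\eps|M|) \;=\; |M| + |M_{HH}| - O(\eps|M|) \;\ge\; \frac{|M|}{1+\eps},
\]
using the identity $|V^*| + |M_{LL}| = (|M_{LH}| + 2|M_{HH}|) + |M_{LL}| = |M| + |M_{HH}|$ and absorbing the constants into the choice $\Delta = 5(5/\eps+1)\cdot 2\alpha$. The main obstacle is the bound on $|V_{high}|$ via the maximality of $M$: without exploiting independence of $U$ the defect in Hall's would scale with $|V|$ rather than $|M|$, which breaks the argument when the matching is sparse relative to the graph. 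Once $|V_{high}| \le |V^*| + O(\eps|M|)$ is in hand, the three arboricity bounds above and the conflict-free combination of $N$ with $M_{LL}$ are routine.
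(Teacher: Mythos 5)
The proposal is correct and takes a genuinely different route from the paper's. The paper starts from $\cM^*_1 := \cM^*\cap E(G_\Delta)$ and augments it, via a two-level partition of $V_{high}$ --- first into $V^{in}$ (many high-degree neighbors) and $V^{out}$, then the free $V^{out}$-vertices into ``risky'' and ``safe'' --- applying the exact form of Hall's theorem only to the safe set. You bypass that case analysis by exploiting the \emph{maximality} of $\cM^*$: the unmatched set $U = V\setminus V(\cM^*)$ is independent, so every neighbor of a high-degree unmatched vertex lies in $V(\cM^*)$, and one arboricity bound on $G[(V_{high}\cap U)\cup V(\cM^*)]$ gives $|V_{high}\cap U| = O(\eps|\cM^*|)$, hence $|V_{high}|\le|V^*|+O(\eps|\cM^*|)$ where $V^*=V_{high}\cap V(\cM^*)$. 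With that in hand, a deficiency-form Hall argument applied directly to all of $V^*$ (against $V_{low}\setminus V(M_{LL})$), with base matching $M_{LL}$ rather than the larger $\cM^*_1$, finishes the job: the three arboricity bounds push the deficiency $\max_W(|W|-|\Gamma_H(W)|)$ below $O(\eps|\cM^*|)$ because $\Delta/\alpha = \Theta(1/\eps)$ makes the linear coefficient on $|W|$ overwhelmingly negative while the additive term $2|V_{high}|+2|M_{LL}|$ is only $O(|\cM^*|)$. Neither the independence-of-the-unmatched-set trick nor the defect formulation appears in the paper, and your version is arguably cleaner and exposes the identity $|V^*|+|M_{LL}|=|\cM^*|+|M_{HH}|$ explicitly. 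One thing the paper's decomposition buys, though, is that it never uses that $\cM^*$ is maximum --- which is precisely why essentially the same argument transfers to the $\eta$-maximal matching sparsifier (Theorem~\ref{amaximalpf}), where the underlying matching is not maximum and the free set is not independent; your route would need to be modified there.
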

\begin{proof}
We shall construct a matching $\cM_{\Delta}$ for $G_{\Delta}$ satisfying $|\cM^*| \le (1+\eps) \cdot |\cM_{\Delta}|$.

Let $\cM^*_1$ be the subset of $\cM^*$
of all edges  that belong
to $G_{\Delta}$, and initialize $\cM_{\Delta}$ to $\cM^*_1$.
Define $\cM^*_2= \cM^* \setminus \cM^*_1$ as the complementary subset of $\cM^*$. We next show that
sufficiently many additional edges of $G_{\Delta}$ can be added to $\cM_{\Delta}$ while keeping it a valid matching.

Note that any edge with two endpoints  in $V_{low}$ must belong to $G_{\Delta}$.
Since the edges of $\cM^*_2$ do not belong to $G_{\Delta}$ by definition,
any edge of $\cM^*_2$ is adjacent on at least one vertex of $V_{high}$.

Let $V^{in} = V^{in}_{high}$ be the subset of $V_{high}$ of all vertices with at least $2\Delta/5$ neighbors in $V_{high}$, and let $V^{out} = V^{out}_{high} = V_{high} \setminus V^{in}$ be the complementary subset of $V_{high}$.
Observe that $G_{high}$ has arboricity at most $\alpha$, and thus average degree at most $2\alpha$.
Moreover, every vertex in $V^{in}$ has degree at least $2\Delta/5 = 2(5/\eps + 1) 2\alpha \ge (10/\eps + 1) 2\alpha$ in $G_{high}$. Observation \ref{basic} thus yields
\begin{equation} \label{in1}
|V^{in}| ~\le~  \eps/10 \cdot |V^{out}|.
\end{equation}
\begin{observation} \label{obsvout}
Each vertex in $V^{out}$ has more than $3\Delta/5$ neighbors in $V_{low}$ within $G_\Delta$.
\end{observation}
\begin{proof}
First, note that any vertex in $V_{low}$ marks all its $< \Delta$ adjacent edges.
Since each vertex in $V^{out}$ has $< 2\Delta/5$ neighbors in $V_{high}$ but a degree of $\ge \Delta$,
the remaining $> 3\Delta/5$ neighbors must be in $V_{low}$. Each vertex of $V^{out}$ marks $\Delta$ edges, and any of the $> 3\Delta/5$ edges adjacent to a vertex of $V_{low}$ is also marked by
that endpoint in $V_{low}$, and is thus added to $G_\Delta$.
\end{proof}

For a vertex $v$, we will refer to its neighbors within $G_\Delta$ as its \emph{$G_\Delta$-neighbors}.
Let $U = U^{out}$ be the set of vertices in $V^{out}$ that are free w.r.t.\ $\cM^*_1$.
By Observation \ref{obsvout}, each vertex of  $U$ has more than $3\Delta/5$ $G_\Delta$-neighbors in $V_{low}$.
Denote by $\Gamma = \Gamma(U)$ the set of all $G_\Delta$-neighbors of vertices from $U$ in $V_{low}$.
We next partition $\Gamma$ into two sets, the sets ${\Gamma}_{matched}$ and ${\Gamma}_{free}$ of vertices that are matched and free  w.r.t.\ $\cM^*_1$, respectively.
A vertex $u \in U$ is called \emph{risky} if at least $2\Delta/5$ of its $G_\Delta$-neighbors
in $\Gamma$ are in ${\Gamma}_{matched}$, otherwise it is \emph{safe}, and then more than $\Delta/5$ of its $G_\Delta$-neighbors are in  ${\Gamma}_{free}$.
Let $U_{risky}$ and $U_{safe}$ be the sets of all risky and safe vertices of $U$, respectively.
\begin{claim} \label{risky}
$|U_{risky}| \le  \eps/5 \cdot |\cM^*_1|$.
\end{claim}
\begin{proof}
For each vertex $u \in U_{risky}$, let $\Gamma_{matched}(u)$ be the set of its at least $2\Delta/5$ $G_\Delta$-neighbors in ${\Gamma}_{matched}$.
Let  $\Gamma_{risky} = \bigcup_{u \in U_{risky}} \Gamma_{matched}(u)$ denote the union of the sets $\Gamma_{matched}(u)$ over all vertices $u \in U_{risky}$.
Since all vertices in $\Gamma_{risky}$ are matched w.r.t.\ the matching $\cM^*_1$,
$|\Gamma_{risky}| \le 2|\cM^*_1|$.
Observe that the subgraph $G_{risky}$ of $G_\Delta$ induced by $U_{risky} \cup \Gamma_{risky}$ has arboricity at most $\alpha$,
and thus average degree at most $2\alpha$.
Moreover, each vertex in $U_{risky}$ has at least $2\Delta/5$ neighbors in $\Gamma_{risky}$, and thus its degree in $G_{risky}$ is at least $2\Delta/5 = 2(5/\eps + 1) 2\alpha
\ge (10/\eps + 1) 2\alpha$.
Observation \ref{basic} thus yields
$|U_{risky}| ~\le~ \eps/10 \cdot |\Gamma_{risky}| \le \eps/5 \cdot |\cM^*_1|$.
\end{proof}
Note that any edge in $G_\Delta$ between a vertex in $U_{safe}$ and a vertex in ${\Gamma}_{free}$
is vertex-disjoint to all edges of $\cM^*_1$.
Consequently, the following claim implies that at least $|U_{safe}|$ edges of $G_\Delta$ can be added to the matching $\cM_\Delta$ while preserving its validity.  
\begin{claim} \label{safe}
There exists a  matching $\cM_{safe}$ in $G_\Delta$ that entirely covers $U_{safe}$ by edges between $U_{safe}$ and ${\Gamma}_{free}$.
In particular, $|\cM_{safe}| = |U_{safe}|$.
\end{claim}
\begin{proof}
For each vertex $u \in U_{safe}$, let $\Gamma_{free}(u)$ be the set of its at least $\Delta/5$ $G_\Delta$-neighbors in ${\Gamma}_{free}$.
Let $\Gamma_{safe} = \bigcup_{u \in U_{safe}} \Gamma_{free}(u)$ denote the union of the sets $\Gamma_{free}(u)$ over all vertices $u \in U_{safe}$.
Consider the induced bipartite subgraph $G_{safe}$ of $G_\Delta$ with sides $U_{safe}$ and $\Gamma_{safe}$.
We argue that for any subset $W \subseteq U_{safe}$, its neighborhood in $G_{safe}$, namely,
$$\Gamma_{G_{safe}}(W) ~=~ \bigcup_{v \in W} \Gamma_{G_{safe}}(v) ~=~ \bigcup_{v \in W} \Gamma_{free}(v),$$  is of larger size.
Obsserve that the subgraph $G^W_{safe}$  of $G_{safe}$ induced by the vertex set $W \cup \Gamma_{G_{safe}}(W)$ has arboricity at most $\alpha$, and thus average degree at most $2\alpha$.
Moreover, each vertex of $W$ has at least $\Delta/5$ neighbors in $\Gamma_{G_{safe}}(W)$, i.e., its degree in $G^W_{safe}$ is at least $\Delta/5 =(5/\eps + 1) 2\alpha$.
Observation \ref{basic} thus yields $|W| \le \eps / 5 \cdot |\Gamma_{G_{safe}}(W)| < |\Gamma_{G_{safe}}(W)|$.
By Hall's marriage theorem (Theorem \ref{hall}), there exists a matching $\cM_{safe}$ in $G_{safe}$ that entirely covers $U_{safe}$.
Claim \ref{safe} follows.
\end{proof}
We add all edges in the matching $\cM_{safe}$ guaranteed by Claim \ref{safe} to  $\cM_\Delta$, so that $\cM_\Delta = \cM^*_1 \cup \cM_{safe}$.
Since $\cM^*$ is a maximum matching for $G$ that is a disjoint union of $\cM^*_1$ and $\cM^*_2$ and as $\cM_\Delta$ is a disjoint union of $\cM^*_1$ and $\cM_{safe}$, it follows that $|\cM^*_2| \ge |\cM_{safe}|$.
Although the vertices of $V^{in}$ may be matched w.r.t.\ $\cM^*_2$,   we have $|V^{in}| \le \eps/10 \cdot |V^{out}|$ by Equation (\ref{in1}).
By definition, all vertices of $V^{out} \setminus U$ are matched w.r.t. $\cM^*_1$, thus $|V^{out} \setminus U| \le 2|\cM^*_1||$.
Hence $|V^{out}| \le 2|\cM^*_1| + |U|$, and so
\begin{equation} \label{in}
|V^{in}| ~\le~ \eps/10 \cdot |V^{out}| ~\le~ \eps /10 \cdot (2|\cM^*_1| +  |U_{risky}| + |U_{safe}|).
\end{equation}
Since any edge of $\cM^*_2$ is adjacent on at least one vertex of $V_{high}$ and as all vertices of $V^{out} \setminus U$ are matched w.r.t. $\cM^*_1$,
 $|\cM^*_2| \le  |V^{in}| + |U_{risky}| + |U_{safe}|$.
Combined with Equation (\ref{in}) and Claim \ref{risky},
\begin{eqnarray*}
|\cM^*_2| &\le&  |V^{in}| + |U_{risky}| + |U_{safe}| ~\le~ \eps /10 \cdot (2|\cM^*_1| +  |U_{risky}| + |U_{safe}|) + |U_{risky}| + |U_{safe}|
\\ &\le& \eps/10 \cdot (2|\cM^*_1| +  \eps/5 \cdot |\cM^*_1| + |M^*_2|) + \eps/5 \cdot |\cM^*_1| + |M_{safe}|
\\ &=& (\eps/5 + \eps^2 / 50 + \eps/5) \cdot |\cM^*_1| + \eps/10 \cdot |M^*_2| + |M_{safe}|
~\le~ \eps/2 \cdot |M^*_1| + \eps/10 \cdot |M^*_2| + |M_{safe}|,
\end{eqnarray*}
yielding
\begin{eqnarray*}
|\cM_\Delta| &=& |\cM^*_1| + |\cM_{safe}|
~\ge~ |\cM^*_1| + |\cM^*_2| - \eps/2 \cdot |\cM^*_1| - \eps/10 \cdot |\cM^*_2|
\\ &\ge&  (1-\eps/2) \cdot (|\cM^*_1| + |\cM^*_2|) ~=~ (1-\eps/2) \cdot |\cM^*|.
\end{eqnarray*}
To complete the proof of Lemma \ref{core}, observe that
\begin{eqnarray*}
|\cM^*| &\le& \frac{1}{1-\eps/2} \cdot |\cM_\Delta| ~\le~ (1+\eps) \cdot |\cM_\Delta| ~\le~ (1+\eps) \cdot |\cM^*_\Delta|. 
\end{eqnarray*}
\end{proof}
The following theorem shows that $G_\Delta$ is a $(1+\eps)$-maximal matching sparsifier.
\begin{theorem} \label{amaximalpf}
Let $G$ be a graph of arboricity bounded by $\alpha$ and $G_\Delta$ defined as above, for $\Delta = 5(5/\eps + 1) 2\alpha,\eps \le 1$.
Any $\eta$-maximal matching for $G_{\Delta}$ is an $(\eps + 3\eta)$-maximal matching for $G$, for any $\eta > 0$.
\end{theorem}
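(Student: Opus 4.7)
The plan is to let $M$ be an $\eta$-maximal matching for $G_\Delta$ and to bound the maximum number of edges of $G$ that can be added to $M$ while keeping it a valid matching. Write $F$ for the set of vertices of $V$ that are free w.r.t.\ $M$, and let $M_{add}$ be a maximum matching in $G[F]$; the goal then becomes showing $|M_{add}| \le (\eps + 3\eta) |M|$.

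The key idea is to re-run the construction underlying the proof of Theorem \ref{core} on the input matching $\cM^* := M \cup M_{add}$ rather than on a maximum matching of $G$. Since $M \subseteq G_\Delta \subseteq G$ and $M_{add}$ is a matching in $G[F]$ vertex-disjoint from $M$, the union $\cM^* = M \cup M_{add}$ is a valid matching in $G$, which is all the construction ever uses; the maximality of $\cM^*$ in $G$ is used only in the final inequality $|\cM^*_\Delta| \ge |\cM_\Delta|$ at the very end of the proof of Theorem \ref{core}, which we do not need here. Splitting $\cM^* = \cM^*_1 \cup \cM^*_2$ as in that proof, where $\cM^*_1 = \cM^* \cap G_\Delta \supseteq M$ and $\cM^*_2 = M_{add} \setminus G_\Delta$, the construction produces a matching $\cM_\Delta = \cM^*_1 \cup \cM_{safe}$ in $G_\Delta$ satisfying
\[
|\cM_\Delta| \;\ge\; (1-\eps/2)\,|\cM^*| \;=\; (1-\eps/2)\bigl(|M|+|M_{add}|\bigr).
\]

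I now exploit the fact that $M \subseteq \cM^*_1$: the vertices covered by $\cM_{safe}$ are free w.r.t.\ $\cM^*_1$ (by construction) and hence free w.r.t.\ $M$, while the edges in $\cM^*_1 \setminus M = M_{add} \cap G_\Delta$ likewise sit on vertices free w.r.t.\ $M$. Therefore $\cM_\Delta \setminus M$ is a matching in $G_\Delta$ supported entirely on $F$, so by the $\eta$-maximality of $M$ in $G_\Delta$ it has size at most $\eta|M|$; equivalently, $|\cM_\Delta| \le (1+\eta)|M|$. Combining the two bounds gives $(1+\eta)|M| \ge (1-\eps/2)(|M|+|M_{add}|)$, and a short calculation using $\eps \le 1$ (so that $1/(1-\eps/2) \le 1+\eps$) yields
\[
|M_{add}| \;\le\; \frac{\eta + \eps/2}{1-\eps/2}\,|M| \;\le\; (\eps + 2\eta)|M| \;\le\; (\eps + 3\eta)|M|,
\]
as required.

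The main obstacle is conceptual rather than technical: one must recognize that the machinery of Theorem \ref{core} is in fact a constructive transformation that takes \emph{any} matching of $G$ to a matching in $G_\Delta$ of comparable size, and that feeding it the matching $M \cup M_{add}$ guarantees $M \subseteq \cM_\Delta$, which in turn makes $\cM_\Delta \setminus M$ a legitimate extension of $M$ within $G_\Delta$ to which the $\eta$-maximality hypothesis applies. Once this is observed, the rest of the argument is routine algebra.
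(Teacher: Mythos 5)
Your argument is correct and takes a genuinely different route from the paper's. The paper proves Theorem~\ref{amaximalpf} by re-running a variant of the Theorem~\ref{core} argument ``from scratch'' around $\cM_\Delta$: it splits $\cM'_\Delta\setminus\cM_\Delta$ into \emph{special} edges (in $G_\Delta$, at most $\eta|\cM_\Delta|$ by $\eta$-maximality) and \emph{ordinary} edges (each touching $V_{high}$), bounds the ordinary edges by $|F_{high}|$, and then bounds $F_{high}\cap V^{in}$, $F_{risky}$, $F_{safe}$ directly by reproving Claims~\ref{risky} and~\ref{safe} with $\cM_\Delta$ in place of $\cM^*_1$, together with a second invocation of $\eta$-maximality to cap $|F_{safe}|$. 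Your approach is more modular: it treats the construction of Theorem~\ref{core} as a black-box transformation $\cM^*\mapsto\cM_\Delta$ with $\cM^*_1\subseteq\cM_\Delta$ and $|\cM_\Delta|\ge(1-\eps/2)|\cM^*|$, feeds it $\cM^*=M\cup M_{add}$ so that $M\subseteq\cM^*_1\subseteq\cM_\Delta$, and then invokes $\eta$-maximality exactly once against $\cM_\Delta\setminus M$. This is cleaner, avoids redoing the risky/safe bookkeeping, and even yields the slightly stronger constant $\eps+2\eta$.

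One point needs care, though. You assert that the maximality of $\cM^*$ enters the proof of Theorem~\ref{core} ``only in the final inequality $|\cM^*_\Delta|\ge|\cM_\Delta|$''. That is not literally true: as written, the paper's chain also substitutes $|U_{safe}|\le|\cM^*_2|$ inside the parenthesized term, and that step is justified via $|\cM_{safe}|\le|\cM^*_2|$, which relies on $\cM^*$ being maximum. Your strategy is still sound, but you should verify explicitly that the intermediate inequality can be rearranged to avoid this. Indeed it can: keeping $|U_{safe}|=|\cM_{safe}|$ symbolically gives
\[
|\cM^*_2|\;\le\;\eps/5\,|\cM^*_1|+(1+\eps/10)\bigl(\eps/5\,|\cM^*_1|+|\cM_{safe}|\bigr)\;\le\;\eps/2\,|\cM^*_1|+(1+\eps/10)\,|\cM_{safe}|,
\]
from which $|\cM_\Delta|=|\cM^*_1|+|\cM_{safe}|\ge\bigl((1-2\eps/5)|\cM^*_1|+|\cM^*_2|\bigr)/(1+\eps/10)\ge(1-\eps/2)|\cM^*|$ for $\eps\le1$, without any appeal to maximality. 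With that verified, your derivation goes through as stated.
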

\begin{proof}
The proof of this theorem is similar to that of Theorem \ref{core}, thus we aim for conciseness.

Consider an arbitrary $\eta$-maximal matching $\cM_{\Delta} = \cM^\eta_{\Delta}$ for $G_{\Delta}$,
and let $\cM'_\Delta$ be any   matching for $G$ obtained by adding edges to $\cM_\Delta$. We will show that $|\cM'_\Delta \setminus \cM_\Delta| \le (\eps + 2\eta) \cdot |\cM_\Delta|$.

Since $\cM_\Delta$ is $\eta$-maximal w.r.t.\ $G_\Delta$,
at most $\eta \cdot |\cM_\Delta|$ edges of $\cM'_\Delta \setminus \cM_\Delta$ may belong to $G_\Delta$; in what follows we refer to those edges as \emph{special edges} and to
the remaining edges of $\cM'_\Delta \setminus \cM_\Delta$ as ordinary edges. Denote the set of ordinary edges in $\cM'_\Delta \setminus \cM_\Delta$ by $O$.
Since any edge with two endpoints in $V_{low}$ belongs to $G_\Delta$, it follows that any edge of $O$ has at least one endpoint in $V_{high}$.
Denote the set of vertices in $V_{high}$ that are free w.r.t.\ $\cM_\Delta$ by $F_{high}$, and note that $|O| \le |F_{high}|$.

Defining the sets $V^{in}$ and $V^{out}$ as in the proof of Theorem \ref{core}, Equation (\ref{in1}) yields $|V^{in}| \le  \eps/10 \cdot |V^{out}|$.
Notice that $|V^{out}| \le 2|\cM_\Delta| + |F_{high} \cap V^{out}|$,
hence
\begin{equation} \label{highin}
|F_{high} \cap V^{in}| ~\le~ |V^{in}| ~\le~  \eps/10 \cdot |V^{out}| ~\le~ \eps/10 \cdot (2|\cM_\Delta| + |F_{high} \cap V^{out}|).
\end{equation}
Observation \ref{obsvout} from the proof of Theorem \ref{core} remains valid, and it implies that each vertex in $F_{high} \cap V^{out}$ has more than $3\Delta/5$ $G_\Delta$-neighbors in $V_{low}$.
Denote by $\Gamma = \Gamma(F_{high} \cap V^{out})$ the set of all $G_\Delta$-neighbors of vertices from $F_{high} \cap V^{out}$ in $V_{low}$.
We next partition $\Gamma$ into two sets, the sets ${\Gamma}_{matched}$ and ${\Gamma}_{free}$ of vertices in $\Gamma$ that are matched and free  w.r.t.\ $\cM_\Delta$, respectively.
A vertex $f \in F_{high} \cap V^{out}$ is called \emph{risky} if at least $2\Delta/5$ of its $G_\Delta$-neighbors
 are in ${\Gamma}_{matched}$, otherwise it is \emph{safe}, and then more than $\Delta/5$ of its $G_\Delta$-neighbors are in  ${\Gamma}_{free}$.
Let $F_{risky}$ and $F_{safe}$ be the sets of all risky and safe vertices of $F_{high} \cap V^{out}$, respectively.
Following similar lines as those in the proof of Claim \ref{risky},
we get $|F_{risky}| \le  \eps/5 \cdot |\cM_\Delta|$.
Also, following similar lines as those in the proof of Claim \ref{safe}, we establish the existence of a matching $\cM_{safe}$ in $G_\Delta$ that entirely covers $F_{safe}$ by edges between $F_{safe}$ and ${\Gamma}_{free}$.
Since all edges of $\cM_{safe}$ belong to $G_\Delta$ and can be added to $\cM_\Delta$ while keeping it a valid matching for $G_\Delta$,
the fact that $\cM_\Delta$ is $\eta$-maximal w.r.t.\ $G_\Delta$ yields
$|F_{safe}| \le \eta \cdot |\cM_\Delta|$.
It follows that $|F_{high} \cap V^{out}| = |F_{risky}| + |F_{safe}| \le \eps/5 \cdot |\cM_\Delta| + \eta \cdot |\cM_\Delta|$.
Recall that $|O| \le |F_{high}|$. Combined with Equation (\ref{highin}), we conclude that
 \begin{eqnarray*}
|O| &\le& |F_{high}| ~=~ |F_{high} \cap V^{in}| + |F_{high} \cap V^{out}|
\\&\le& \eps/10 \cdot (2|\cM_\Delta| + |F_{high} \cap V^{out}|) + \eps/5 \cdot |\cM_\Delta| + \eta \cdot |\cM_\Delta|
\\&\le& \eps/10 \cdot (2|\cM_\Delta| + \eps/5 \cdot |\cM_\Delta| + \eta \cdot |\cM_\Delta|) + \eps/5 \cdot |\cM_\Delta| + \eta \cdot |\cM_\Delta|
\\&\le& (\eps/5  + \eps^2 / 50 +  \eps \cdot \eta  / 10 + \eps/5 +  \eta) \cdot |\cM_\Delta|
~<~ (\eps + 2\eta) \cdot |\cM_\Delta|,
\end{eqnarray*}
i.e.,  the number $|O|$ of ordinary edges in $\cM'_\Delta \setminus \cM_\Delta$ is at most $(\eps + 2\eta) \cdot |\cM_\Delta|$.
Recall that the number of special edges in $\cM'_\Delta \setminus \cM_\Delta$ is at most $\eta \cdot |\cM_\Delta|$,
thus $|\cM'_\Delta \setminus \cM_\Delta| \le (\eps + 3\eta) \cdot |\cM_\Delta|$.  
\end{proof}

\subsection{The VC sparsifier} \label{s32}
Let $G$ be a graph of arboricity bounded by $\alpha$, set $\Delta = (1/\eps + 1) \cdot 2\alpha$, and define the sets $V_{high}, V_{low}$ and the subgraph $G_{low}$ accordingly.
To prove that the pair $(G_{low},V_{high})$ is a VC sparsifier, we use the next lemma.
\begin{lemma} \label{eqlarge}
Let $VC$ be an arbitrary VC for $G$,
and let $U = U_{high}$ be the set of vertices in $V_{high}$ that are not in $VC$. Then $|U| \le \eps \cdot |VC|$.
\end{lemma}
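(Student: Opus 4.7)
The plan is to reduce the statement to a direct application of Observation \ref{basic}. The key structural fact is that if $u \in U$, then $u \notin VC$ by definition, so every edge incident on $u$ must be covered by its other endpoint, meaning every neighbor of $u$ in $G$ belongs to $VC$. Since $u \in V_{high}$, this gives each vertex of $U$ at least $\Delta = (1/\eps + 1) \cdot 2\alpha$ neighbors, all of which lie in $VC$.

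With this observation in hand, I would pass to the subgraph $H = G[U \cup VC]$ induced by the disjoint union $U \cup VC$ (disjoint because $U \cap VC = \emptyset$). Since $H$ is a subgraph of $G$ it has arboricity at most $\alpha$, hence average degree at most $2\alpha$. Moreover, by the preceding paragraph every vertex of $U$ still has at least $\Delta$ neighbors inside $H$ (all of them landing in $VC$), so its degree in $H$ is at least $(1/\eps + 1) \cdot 2\alpha$.

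Now I would invoke Observation \ref{basic} with $V_1 = U$, $V_2 = VC$, $\beta = 2\alpha$, and $c = 1/\eps$. The hypothesis ``each vertex of $V_1$ has degree at least $(c+1)\beta$'' becomes exactly the degree bound $(1/\eps + 1) \cdot 2\alpha$ established above, so the observation yields
\[
|U| \;\le\; \frac{|VC|}{c} \;=\; \eps \cdot |VC|,
\]
which is the desired conclusion.

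There is no substantial obstacle here: the only thing to verify carefully is that the degree bound used in Observation \ref{basic} is taken in the induced subgraph $H$ rather than in $G$, but this is immediate since all relevant neighbors (namely those in $VC$) are already inside $H$.
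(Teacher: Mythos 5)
Your proof is correct and takes essentially the same approach as the paper: both reduce the claim to Observation \ref{basic} by noting that every edge incident on a vertex of $U$ must be covered by its other endpoint, which therefore lies in $VC$, and then apply the observation to the induced subgraph on $U$ together with these neighbors. The only cosmetic difference is that you take $V_2 = VC$ directly, whereas the paper takes $V_2 = \Gamma$, the neighborhood of $U$ (a subset of $VC$), and then bounds $|\Gamma| \le |VC|$; both yield the same conclusion.
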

\begin{proof}
Denote by $\Gamma = \Gamma_{high}$ the set of neighbors in $G \setminus U$ of all vertices of $U$, and note that $\Gamma \subseteq VC$,
as otherwise $VC$ cannot be a VC for $G$.
Observe that the subgraph $G' = (U \cup \Gamma,  E') = G[U \cup \Gamma]$ induced by $U \cup \Gamma$ has arboricity at most $\alpha$, and thus average degree at most $2\alpha$.
Moreover, every vertex in $U$ has degree at least $\Delta$.
Observation \ref{basic} thus yields
$|U| ~\le~ \eps \cdot |\Gamma| ~\le~ \eps \cdot |VC|.$
\end{proof}

The following theorem shows that $(G_{low},V_{high})$ is a $(1+\eps)$-VC sparsifier.
\begin{theorem} \label{vcspa}
Let $G$ be a graph of arboricity bounded by $\alpha$.
Also, let $VC^*$ be a minimum VC for $G$, let $VC^t_{low}$ be a $t$-VC for $G_{low}$, for any $t \ge 1$, and define $\widetilde{VC} = VC^t_{low} \cup V_{high}$.
Then $\widetilde{VC}$ is a $(t+\eps)$-VC for $G$.  
\end{theorem}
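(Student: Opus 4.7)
The plan is to verify that $\widetilde{VC}$ is a valid vertex cover for $G$, and then to bound its size in two separate pieces: the piece coming from $VC^t_{low}$ and the piece coming from $V_{high}$. Lemma \ref{eqlarge}, applied to the optimal cover $VC^*$, is precisely the tool that makes the second piece behave like $|V_{high} \cap VC^*|$ up to an additive slack of $\eps |VC^*|$.

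First I would argue validity. Every edge $e \in E$ either has at least one endpoint in $V_{high}$, in which case $e$ is covered by $V_{high} \subseteq \widetilde{VC}$, or it has both endpoints in $V_{low}$, in which case $e \in E(G_{low})$ and is therefore covered by $VC^t_{low} \subseteq \widetilde{VC}$. Thus $\widetilde{VC}$ covers $E$.

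Next I would bound $|VC^t_{low}|$. The key observation is that $VC^* \cap V_{low}$ is itself a vertex cover for $G_{low}$: any edge of $G_{low}$ has both endpoints in $V_{low}$, is an edge of $G$, and is therefore covered by $VC^*$, so the covering endpoint must lie in $VC^* \cap V_{low}$. Hence the minimum vertex cover of $G_{low}$ has size at most $|VC^* \cap V_{low}|$, which yields
\[
|VC^t_{low}| \;\le\; t \cdot |VC^* \cap V_{low}|.
\]
For the high-degree piece, I apply Lemma \ref{eqlarge} to $VC^*$: letting $U = V_{high} \setminus VC^*$, the lemma gives $|U| \le \eps \cdot |VC^*|$, so
\[
|V_{high}| \;=\; |V_{high} \cap VC^*| + |U| \;\le\; |V_{high} \cap VC^*| + \eps \cdot |VC^*|.
\]

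Finally I combine the two bounds. Using $t \ge 1$ and the disjoint decomposition $|VC^*| = |VC^* \cap V_{low}| + |VC^* \cap V_{high}|$, I get
\[
|\widetilde{VC}| \;\le\; t \cdot |VC^* \cap V_{low}| + |V_{high} \cap VC^*| + \eps \cdot |VC^*| \;\le\; t \cdot |VC^*| + \eps \cdot |VC^*| \;=\; (t+\eps) \cdot |VC^*|,
\]
which is exactly the claimed approximation ratio. There is no substantive obstacle in the proof; the only subtlety is recognizing that restricting $VC^*$ to $V_{low}$ gives a valid cover for $G_{low}$, which is what lets us avoid double-counting $VC^* \cap V_{high}$ when we add $V_{high}$ to the sparsifier's cover.
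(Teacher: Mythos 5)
Your proof is correct and follows essentially the same route as the paper's: restrict $VC^*$ to $V_{low}$ to bound $|VC^t_{low}|$, split $|V_{high}|$ into $|VC^* \cap V_{high}| + |U|$, and invoke Lemma \ref{eqlarge} on $VC^*$. The only difference is that you make the validity of $\widetilde{VC}$ explicit, which the paper defers to the informal discussion of VC sparsifiers in Section \ref{s11}.
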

\begin{proof}
Since $VC^*$ is a VC for $G$, $VC^* \cap V_{low}$ must be a VC for $G_{low}$.
Hence
$|{VC}^t_{low}| ~\le~  t \cdot |VC^* \cap V_{low}|$.
Denoting by $U = U_{high}$  the set of vertices in $V_{high}$ that are not in $VC^*$,
we have $|V_{high}| = |VC^* \cap V_{high}| + |U|$. Also, Lemma \ref{eqlarge}  yields $|U| \le \eps \cdot |VC^*|$.
Since $|{VC}^t_{low}| ~\le~  t \cdot |VC^* \cap V_{low}|$, it follows that
\begin{eqnarray*}
|\widetilde{VC}| &=& |{VC}^t_{low}| + |V_{high}| ~=~ |{VC}^t_{low}| + |VC^* \cap V_{high}| + |U| \\ &\le&
t \cdot |VC^* \cap V_{low}| + |VC^* \cap V_{high}| + \eps \cdot |VC^*|
~\le~ (t+\eps) \cdot |VC^*|. 
\end{eqnarray*}
\end{proof}


\subsection{The IS sparsifier} \label{s33}
Let $G$ be a graph of average degree bounded by $\beta$ (the arboricity may be much larger than $\beta$).
Set $\Delta = ((\beta+1)/\eps + 1) \cdot \beta$, and define the sets $V_{high}, V_{low}$ and the subgraph $G_{low}$ accordingly.
We assume that $\beta \ge 1$, as we may ignore isolated vertices (adding all of them to the independent set).
To show that $G_{low}$ is an IS sparsifier, we make the following observation.
\begin{observation} \label{basic2}
Let $V_1$ be any set of vertices in an arbitrary graph $G = (V,E)$, and let $V_2$ be the complementary subset of $V$.
Let $IS^*$, $IS^*_1$ and $IS^*_2$ be maximum independent sets for the graph $G$ and its subgraphs $G[V_1]$ and $G[V_2]$, respectively.
Then $|IS^*_1| + |IS^*_2| \ge |IS^*|$.
\end{observation}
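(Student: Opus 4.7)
The plan is to prove the inequality by taking the maximum independent set $IS^*$ of $G$ and splitting it along the partition $V = V_1 \cup V_2$. Specifically, I would set $I_1 = IS^* \cap V_1$ and $I_2 = IS^* \cap V_2$, and note that these two sets partition $IS^*$, so $|I_1| + |I_2| = |IS^*|$.

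The next step is to observe that $I_1$ is an independent set in the induced subgraph $G[V_1]$: since $I_1 \subseteq IS^*$ and $IS^*$ has no edges in $G$, it certainly has no edges in the subgraph $G[V_1]$. By maximality of $IS^*_1$ in $G[V_1]$, this gives $|IS^*_1| \ge |I_1|$. The symmetric argument gives $|IS^*_2| \ge |I_2|$.

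Adding these two inequalities yields $|IS^*_1| + |IS^*_2| \ge |I_1| + |I_2| = |IS^*|$, which is the desired conclusion.

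There is no real obstacle here; the statement is essentially a consequence of the fact that any subset of an independent set is itself independent, and that a partition of the vertex set induces a partition of any independent set into two pieces, each of which lives in one of the induced subgraphs.
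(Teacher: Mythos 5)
Your proof is correct and is essentially the paper's own argument: intersect $IS^*$ with $V_1$ and $V_2$, observe both pieces remain independent in the respective induced subgraphs, and invoke maximality of $IS^*_1$ and $IS^*_2$; you have simply spelled out the steps more explicitly.
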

\begin{proof}
Both $IS^* \cap V_1$ and $IS^* \cap V_2$ are ISs, hence  $|IS^* \cap V_1| \le |IS^*_1|,  |IS^* \cap V_2| \le |IS^*_2|$.
\end{proof}

The following theorem shows that $G_{low}$ is an $(1+\eps)$-IS sparsifier
\begin{theorem}
Let $G$ be a graph of average degree bounded by $\beta$.
Also, let $IS^*$ (respectively, $IS^*_{low}$) be a maximum IS for $G$ (resp., $G_{low}$), let $IS^t_{low}$ be a $t$-IS for $G_{low}$, for any $t \ge 1$.  
Then $IS^t_{low}$ is a $t(1+\eps)$-IS for $G$, for any $\eps < \beta$.  
\end{theorem}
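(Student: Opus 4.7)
Since $G_{low}$ is an induced subgraph of $G$, any IS in $G_{low}$ is automatically an IS in $G$, so $IS^t_{low}$ is a valid IS for $G$. By the definition of a $t$-approximation, $t \cdot |IS^t_{low}| \ge |IS^*_{low}|$, so it suffices to prove $|IS^*| \le (1+\eps)\,|IS^*_{low}|$. Let $IS^*_{high}$ denote a maximum IS of $G_{high} = G[V_{high}]$. By Observation~\ref{basic2} applied to the partition $V = V_{low} \cup V_{high}$, $|IS^*| \le |IS^*_{low}| + |IS^*_{high}| \le |IS^*_{low}| + |V_{high}|$. It thus remains to prove $|V_{high}| \le \eps \cdot |IS^*_{low}|$.

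I will combine two bounds. First, since every vertex of $V_{high}$ has degree at least $\Delta = ((\beta+1)/\eps + 1)\beta$, Observation~\ref{basic} applied with $V_1 = V_{high}$, $V_2 = V_{low}$, and $c = (\beta+1)/\eps$ yields $|V_{high}| \le |V_{low}|/c = \eps |V_{low}|/(\beta+1)$. Second, I will lower bound $|IS^*_{low}|$ via the Caro--Wei bound applied to $G_{low}$: by Jensen's inequality for the convex function $x \mapsto 1/(x+1)$, $|IS^*_{low}| \ge |V_{low}|/(\bar{d}_{low}+1)$, where $\bar{d}_{low}$ is the average degree of $G_{low}$. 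I claim $\bar{d}_{low} \le \beta$, via
\[
\sum_{v \in V_{low}} d_{G_{low}}(v) \;\le\; 2|E(G)| - \sum_{v \in V_{high}} d_G(v) \;\le\; \beta n - |V_{high}|\,\Delta \;=\; \beta|V_{low}| - |V_{high}|(\Delta - \beta) \;\le\; \beta|V_{low}|,
\]
where the last inequality uses $\Delta \ge \beta$. Hence $|IS^*_{low}| \ge |V_{low}|/(\beta+1)$, and combining with the bound on $|V_{high}|$ gives $|V_{high}| \le \eps |V_{low}|/(\beta+1) \le \eps \cdot |IS^*_{low}|$, as required.

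The subtle step is the average-degree cancellation for $G_{low}$. A naive bound $|IS^*_{low}| \ge |V_{low}|/\Delta$, using only that the maximum degree in $G_{low}$ is less than $\Delta$, is too weak by roughly a factor of $\Delta/(\beta+1) \approx \beta/\eps$, which would force $\Delta$ to scale as $\beta^3/\eps^2$ rather than the claimed $\beta^2/\eps$. The correct observation is that removing the $V_{high}$ vertices (together with all incident edges) can only \emph{decrease} the average degree, since those vertices account for a disproportionately large share of the edge endpoints; hence $G_{low}$ inherits the average-degree bound $\beta$ from $G$ itself, which is exactly what makes Caro--Wei strong enough to close the argument.
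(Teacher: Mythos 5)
Your proof is correct and follows essentially the same route as the paper's: reduce via Observation~\ref{basic2} to bounding $|V_{high}|$, apply Observation~\ref{basic} to get $|V_{high}| \le \eps|V_{low}|/(\beta+1)$, show that deleting $V_{high}$ can only decrease the average degree so $G_{low}$ still has average degree at most $\beta$, and finish with Tur\'an (your Caro--Wei/Jensen formulation is equivalent here). The only cosmetic gap is that you never note $V_{low} \neq \emptyset$ (needed for the average degree of $G_{low}$ to be meaningful), which the paper derives from $\eps < \beta \Rightarrow \Delta > \beta$; it is immediate, but worth a sentence.
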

\begin{proof}
Since $IS^t_{low}$ is an IS for $G_{low}$, it must also be an IS for $G$.

Since the average degree in $G = (V_{low} \cup V_{high},E)$ is bounded by $\beta$
and as every vertex in $V_{high}$ has degree at least $\Delta = ((\beta+1)/\eps + 1) \cdot \beta$, Observation \ref{basic} implies that $|V_{high}| \le  \eps \cdot (|V_{low}| /(\beta+1))$.

Since $\eps < \beta$, we have $\Delta = ((\beta+1)/\eps + 1) \cdot \beta > \beta$, hence there is at least one vertex of degree less than $\Delta$,
i.e., $V_{low}$ is non-empty.
Denote the average degree in $G_{low}$ by $\beta_{low}$.
Since every vertex in $V_{high}$ has degree at least $\Delta = ((\beta+1)/\eps + 1) \cdot \beta$,  we have
$|V_{high}|  \cdot ((\beta+1)/\eps + 1) \cdot \beta + |V_{low}| \cdot \beta_{low} \le 2|E| \le \beta (|V_{low}| + |V_{high}|)$.
It follows that 
$|V_{high}| \cdot ((\beta+1)/\eps) ~\le~ |V_{low}| (1 - \frac{\beta_{low}}{\beta})$, which, together with
the fact that $V_{low}$ is non-empty, yields $\beta_{low} \le \beta$.
By Turan's theorem, $|IS^*_{low}| \ge V_{low} / (\beta_{low} + 1)$, hence
$$|V_{high}| ~\le~  \eps \cdot (|V_{low}| /(\beta+1)) ~\le~ \eps \cdot (|V_{low}| /(\beta_{low} +1)) ~\le~ \eps \cdot |IS^*_{low}|.$$
By Observation \ref{basic2}, $|IS^*| \le |V_{high}| + |IS^*_{low}|$,  
so $|IS^*| ~\le~ (1+\eps) \cdot |IS^*_{low}| ~\le~ t(1+\eps) \cdot |IS^t_{low}|$.
\end{proof}

\section{Applications} \label{app}
\subsection{Local computation algorithms} \label{s41}
Each vertex $v$ is represented as a unique ID from $\{1,\ldots,n\}$,
and the   graph $G = (V,E)$ is given through an adjacency list oracle $\cO_G$ that answers neighbor queries:
given a vertex $v \in V$ and an index $i$, the $i$th neighbor of $v$ is returned if $v$'s degree is $\ge i$; otherwise a null sign is returned.
Consider first our matching sparsifier $G_\Delta$, obtained by
marking up to $\Delta$ arbitrary adjacent edges on every vertex $v$, and adding to $G_\Delta$ all edges that are marked  by both endpoints.
Recall that $\Delta = O(\alpha / \eps)$, where $\alpha$ is a bound on the arboricity of $G$.
Our goal is to simulate the execution of any local computation algorithm for approximate matching entirely within our bounded degree sparsifier,
and in this way to reduce the problem from bounded arboricity graphs to bounded degree graphs.
To this end we simply need to be \emph{consistent} about the adjacent edges of a vertex $v$ that we mark,
e.g., for every vertex $v$, mark its first $\Delta$ neighbors on its adjacency list.
Whenever a vertex is queried/probed, there is no need to probe any other neighbor of this vertex besides the first $\Delta$ on its adjacency list,
since none of the edges that lead to the other neighbors is in the sparsifier.
To determine which among these neighbors is also its neighbor in the sparsifier, we perform a symmetric probe for each of the (at most) $\Delta$ neighbors.
In this way any probing procedure of the original graph (which may contain high degree vertices) is restricted to the matching sparsifier,
at the cost of increasing the time complexity by at most a factor of $\Delta$;
this loss is considered negligible, since all the time and space complexities of algorithms in this area are anyway at least polynomial in $\Delta$.
Moreover, by Theorems \ref{core} and \ref{amaximalpf},
any $(1+\eps)$-maximum (respectively, $\eps$-maximal) matching computed for $G_\Delta$ provides a $(1+O(\eps))$-maximum
(resp., $O(\eps)$-maximal) matching for the original graph, thus there is only a negligible loss in the approximation guarantee.
We remark that the local computation algorithm of \cite{Fischer17} is actually an almost-maximal matching algorithm,
and the $(2+\eps)$-approximation guarantee holds as any $\eps$-maximal matching is also a $(2+O(\eps))$-maximum matching.
In this way we reduce
the problems of approximate-maximum matching and almost-maximal matching from graphs of arboricity bounded by $\alpha$ to graphs of degree bounded by
$O(\alpha/\eps)$.  
\begin{corollary}
For any graph $G$ with arboricity bounded by $\alpha$ and for any constant $\eps > 0$:
\begin{itemize}
\item There is a deterministic local computation algorithm for $(1+\eps)$-maximum matching with time complexity $O(\log^* n) \cdot \exp(\alpha)$ and zero space complexity.
(An extension of \cite{EMR14}.)
\item There is a randomized local computation algorithms for $(1+\eps)$-maximum matching with time and space complexities of $\poly(\log n,\alpha)$. (An extension of \cite{LRY17}.)
\item There is a deterministic local computation algorithm for $(2+\eps)$-maximum matching with time complexity $O(\log^* n) \cdot 2^{O(\alpha^2)}$.
(An extension of \cite{Fischer17}.)
\end{itemize}
\end{corollary}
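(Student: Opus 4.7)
The plan is to simulate each of the three existing local computation algorithms on our bounded degree matching sparsifier $G_\Delta$ from Section \ref{s31} instead of on the original graph $G$, with $\Delta = O(\alpha/\eps)$. By Theorem \ref{core}, any $(1+\eps)$-maximum matching for $G_\Delta$ is a $(1+O(\eps))$-maximum matching for $G$; by Theorem \ref{amaximalpf} applied with $\eta = \eps$, an $\eps$-maximal matching for $G_\Delta$ is an $O(\eps)$-maximal matching for $G$, and hence a $(2+O(\eps))$-approximation to the maximum matching of $G$. After rescaling $\eps$ by an absolute constant, these losses are absorbed into the claimed $(1+\eps)$ and $(2+\eps)$ guarantees, so it suffices to produce a clean simulation on top of $\cO_G$.

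To carry out the simulation I fix a deterministic and consistent marking rule: for each vertex $v$, mark the first $\Delta$ entries on the adjacency list returned by $\cO_G$ (all entries if $v$'s degree is less than $\Delta$). With this rule $G_\Delta$ is completely determined without any global coordination, and an edge $(u,v)$ of $G$ belongs to $G_\Delta$ if and only if $u$ appears among the first $\Delta$ neighbors of $v$ and $v$ appears among the first $\Delta$ neighbors of $u$. This predicate is evaluated with $O(\Delta)$ queries to $\cO_G$, and enumerating the at most $\Delta$ neighbors of a vertex $v$ in $G_\Delta$ therefore costs $O(\Delta^2)$ queries to $\cO_G$. I thus obtain an emulated oracle $\cO_{G_\Delta}$ for the sparsifier.

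I then instantiate each of the algorithms of \cite{EMR14,LRY17,Fischer17} by feeding it $\cO_{G_\Delta}$ in place of its input oracle. Since $G_\Delta$ has maximum degree $\Delta = O(\alpha/\eps) = O(\alpha)$ for constant $\eps$, the bounds stated in those papers for bounded degree graphs translate verbatim in terms of $\alpha$, with the $O(\Delta^2) = O(\alpha^2)$ overhead per oracle call absorbed into the dominant $\exp(\alpha)$, $\mathrm{poly}(\log n, \alpha)$, and $2^{O(\alpha^2)}$ factors, respectively; the zero-space property of \cite{EMR14} is preserved because the oracle emulation is stateless. For the third item, the base algorithm of \cite{Fischer17} produces an $\eps$-maximal matching on $G_\Delta$, which Theorem \ref{amaximalpf} promotes to an $O(\eps)$-maximal, and therefore $(2+O(\eps))$-approximate maximum, matching on $G$.

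The main subtlety to address is \emph{consistency}: the local computation model requires all query answers to be mutually consistent with a single global solution. This holds here because the deterministic ``first $\Delta$ neighbors'' marking rule fixes $G_\Delta$ once and for all, independently of the query sequence, and because the underlying algorithms of \cite{EMR14,LRY17,Fischer17} are themselves consistent on $G_\Delta$. Consistency with respect to a single matching in $G_\Delta$ therefore yields consistency with respect to the induced matching in $G$, completing the reduction.
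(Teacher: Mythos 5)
Your proposal matches the paper's own argument essentially line for line: fix the marking rule to the first $\Delta$ entries of each adjacency list so that $G_\Delta$ is deterministically and locally computable, emulate the sparsifier's neighbor oracle via symmetric probes, run the algorithms of \cite{EMR14,LRY17,Fischer17} on the emulated oracle, and invoke Theorems~\ref{core} and~\ref{amaximalpf} (noting that \cite{Fischer17} produces an almost-maximal matching) to control the approximation loss. The only minor difference is bookkeeping — you charge $O(\Delta^2)$ per emulated neighbor query where the paper states the overhead as a factor of $\Delta$ — but either way the overhead is absorbed by the stated complexity bounds, so the arguments coincide.
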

For our VC and IS sparsifiers, things are even simpler.
The IS sparsifier is simply $G_{low}$, i.e., the subgraph induced on the low degree vertices, hence simulating the execution of a local computation algorithm entirely within
the sparsifier can be naturally done without having to probe more than $\Delta$ neighbors of any vertex.
As for the VC sparsifier, we also need to reason about the validating set $V^{high}$, but for any vertex we can determine if it belongs to $V^{high}$ or not
by making one query to the oracle $\cO_G$.
\subsection{Dynamic centralized algorithms} \label{s42}
In this section we employ our sparsifiers to get efficient dynamic algorithms.
The starting point is a ``lazy scheme'' due to \cite{GP13} for maintaining approximate maximum matching for general graphs,
which was refined for bounded arboricity graphs in \cite{PS16}.
This scheme exploits a basic \emph{stability} property of matchings: The size of the maximum matching changes by at most 1 following each update.
Thus if we have a \emph{large} matching  of size close to the maximum,
 it will remain close to it throughout a long update sequence, or formally:
 \begin{lemma} [{\bf Lemma 3.1 in \cite{GP13}}] \label{lazylemma}
Let $\eps,\eps' \le 1/2$. Suppose that $\cM_i$ is a $(1+\eps)$-MCM for $G_i$.
For $j = i,i+1,\ldots, i+\lfloor \eps'\cdot |\cM_i| \rfloor$, let $\cM^{(j)}_i$ denote the matching $\cM_i$ after removing from it all edges that got deleted during the updates $i+1,\ldots,j$.
Then $\cM^{(j)}_i$ is a $(1+2\eps+2\eps')$-MCM for the graph $G_j$.
\end{lemma}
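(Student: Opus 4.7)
The plan is to exploit the \emph{stability} of the maximum matching under single-edge updates, together with the bound on how many matched edges can disappear from $\cM_i$ over the window. Throughout, write $\mu(H)$ for the size of a maximum matching of a graph $H$ and set $k = j-i$, so that $k \le \lfloor \eps'|\cM_i|\rfloor \le \eps'|\cM_i|$.

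The first step is to observe that $\cM_i^{(j)}$ is manifestly a valid matching for $G_j$: by construction every edge of $\cM_i$ that was deleted during updates $i+1,\dots,j$ has been removed, and edge insertions cannot invalidate an existing matching. So the only thing to prove is the approximation inequality $\mu(G_j) \le (1+2\eps+2\eps')\,|\cM_i^{(j)}|$.

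The second step is to establish two elementary ``Lipschitz''-style inequalities. (i) Each update changes $\mu$ by at most $1$: an insertion can only add one edge to an augmenting path and a deletion can only destroy a single matched edge, so $\mu(G_j) \le \mu(G_i) + k$. (ii) Each update removes at most one edge from $\cM_i$: insertions remove nothing, and a single deletion removes at most one edge of $\cM_i$, so $|\cM_i^{(j)}| \ge |\cM_i| - k$. Combining (i) with the hypothesis $\mu(G_i) \le (1+\eps)|\cM_i|$ and the bound $k \le \eps'|\cM_i|$ yields
\[
\mu(G_j) \le (1+\eps+\eps')\,|\cM_i|, \qquad |\cM_i^{(j)}| \ge (1-\eps')\,|\cM_i|.
\]

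The third step is the algebraic closing. Dividing the two estimates gives
\[
\frac{\mu(G_j)}{|\cM_i^{(j)}|} \le \frac{1+\eps+\eps'}{1-\eps'},
\]
and it remains to verify that $(1+\eps+\eps')/(1-\eps') \le 1+2\eps+2\eps'$ in the allowed parameter range. I would do this by cross-multiplying and checking that $(1+2\eps+2\eps')(1-\eps') - (1+\eps+\eps') = \eps(1-2\eps') - 2(\eps')^2 \ge 0$ under the relevant smallness assumption on $\eps,\eps'$. This small piece of algebra, together with the two Lipschitz estimates, gives the stated approximation guarantee.

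The only real obstacle I anticipate is this last arithmetic step: the naive chain of inequalities above produces the ratio $(1+\eps+\eps')/(1-\eps')$, and massaging it into the clean form $1+2\eps+2\eps'$ requires the assumption $\eps,\eps'\le 1/2$ (or a refinement thereof) to kill the quadratic cross-terms. Everything else is just the two one-line stability observations applied in sequence.
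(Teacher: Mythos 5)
First, note that the paper does not actually prove this lemma: it is quoted verbatim as Lemma~3.1 of Gupta and Peng and used as a black box, so there is no ``paper's own proof'' to compare against. Evaluated on its own terms, your argument has the right ingredients but the arithmetic in the closing step genuinely does not close, and the hedge at the end (``or a refinement thereof'') is hiding a real gap rather than a routine calculation.

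Concretely, combining the two Lipschitz estimates separately gives
\[
\mu(G_j)\;-\;|\cM_i^{(j)}| \;\le\; \bigl(\mu(G_i)+k\bigr) \;-\;\bigl(|\cM_i|-k\bigr) \;\le\; \eps\,|\cM_i| + 2k,
\]
which after dividing by $|\cM_i^{(j)}|\ge(1-\eps')|\cM_i|$ yields the ratio $(1+\eps+\eps')/(1-\eps')$. As you computed, this exceeds $1+2\eps+2\eps'$ unless $\eps(1-2\eps')\ge 2(\eps')^2$, and this fails already for $\eps=0$, $\eps'=0.1$ (your bound gives $\approx 1.222 > 1.2$), and more dramatically for $\eps=\eps'=1/2$ (where $4>3$). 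The assumption $\eps,\eps'\le 1/2$ does \emph{not} rescue this chain of inequalities. The missing observation is that the two Lipschitz effects cannot occur simultaneously in a single update: an edge \emph{insertion} can raise $\mu$ by at most $1$ but leaves $\cM_i^{(j)}$ untouched, while an edge \emph{deletion} can remove at most one edge of $\cM_i^{(j)}$ but cannot increase $\mu$. Hence the deficit $\mu(\cdot)-|\cM_i^{(\cdot)}|$ grows by at most $1$ per update (not $2$), giving
\[
\mu(G_j)\;-\;|\cM_i^{(j)}| \;\le\; \bigl(\mu(G_i)-|\cM_i|\bigr) + k \;\le\; (\eps+\eps')\,|\cM_i|,
\]
and therefore
\[
\frac{\mu(G_j)}{|\cM_i^{(j)}|} \;\le\; 1 + \frac{(\eps+\eps')\,|\cM_i|}{(1-\eps')\,|\cM_i|} \;\le\; 1 + 2(\eps+\eps')
\]
for $\eps'\le 1/2$, which is exactly the claimed bound. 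So your overall strategy (stability under single updates plus a count of how many matched edges can vanish) is the right one; the bug is in treating the two per-update effects as independent, which inflates the deficit by a factor of $2k$ instead of $k$ and leaves an unbridgeable gap in the final ratio.
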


%
Hence, we can compute a $(1+\eps/4)$-maximum matching $\cM_i$ at a certain update $i$, and use the same  matching $\cM^{(j)}_i$ throughout all updates $j = i,i+1,\ldots, i' =  i+ \lfloor \eps/4\cdot |\cM_i| \rfloor$. (By Lemma \ref{lazylemma}, $\cM^{(j)}_i$ is a $(1+\eps)$-maximum matching for all graphs $G_j$.)
Next compute a fresh $(1+\eps/4)$-maximum matching $\cM_{i'}$ following update $i'$ and use it throughout all updates $i',i'+1,\ldots,i'+ \lfloor \eps/4 \cdot |\cM_{i'}|\rfloor$, and repeat.

In this way the static time complexity of computing a $(1+\eps)$-maximum matching $\cM$
is amortized over $1 + \lfloor \eps/4\cdot |\cM| \rfloor  = \Omega(\eps \cdot |\cM|)$ updates.
The key insight behind the schemes of \cite{GP13,PS16} is not to compute the approximate matching on the entire graph, but rather on a matching sparsifier,
which is derived from an $O(1)$-VC that is maintained dynamically \emph{by other means}. \cite{PS16} showed that an $\eps$-maximal matching, and thus a $(2+\eps)$-VC,
denoted by $\widetilde{VC}$, can be maintained with $O(\alpha/\eps)$ update time, and the argument used by \cite{PS16} was quite tricky; we will get back to this point soon.
Specifically, the sparsifier $\tilde G$ of \cite{PS16} contains (1) all edges in the subgraph $G[\widetilde{VC}]$ induced by $\widetilde{VC}$,
and (2) for each vertex $v \in \widetilde{VC}$, (up to) $O(\alpha/\eps)$ edges connecting it with arbitrary neighbors outside $\widetilde{VC}$.
Although the resulting sparsifier may have large degree, it is easy to verify that it contains
$O(|\cM| \cdot \alpha/\eps)$ edges, and it can be computed in time linear in its size.
Since a $(1+\eps)$-maximum matching can be computed for the sparsifier $\tilde G$ in time $O(|\tilde G| / \eps) = O(|\cM| \cdot \alpha/\eps^2)$ \cite{HK73,MV80,Vaz12},
the resulting amortized update time is $O(\alpha \cdot \eps^{-3})$. Also, one can easily translate the amortized bound into a worst-case bound, as shown in \cite{GP13}.

The aforementioned stability property also applies to the minimum VC and maximum IS problems.
We next consider the minimum VC problem; see Section \ref{ISS} for a discussion on the maximum IS problem.
Thus the size of the minimum VC changes by at most 1 following each update.
In extending the lazy scheme \cite{GP13,PS16} to the minimum VC problem, the challenge is to efficiently compute a high quality VC sparsifier.
In particular, the $(1+\eps)$-matching sparsifier $\tilde G$ of \cite{PS16} cannot be used as such a VC sparsifier,
since a VC for it (regardless of its approximation guarantee) may not provide a \emph{valid} VC for the graph $G$.

Fix an arbitrary parameter $t \ge 1$.
Consider an arbitrary (sub)family of $n$-vertex graphs $\mathcal{G}$ with arboricity bounded by $\alpha$ that is closed under edge removals (such as the family of planar graphs),
and suppose that we can compute a $t$-VC from scratch in time $T(n)$ for any graph in this family.  
(More generally, we assume that for any $j$-vertex subgraph $H$ of any $G \in \mathcal{G}$, for any $1 \le j \le n$, we can compute in time $T(j)$ a $t$-VC for $H$.)
Next, we adapt the lazy scheme \cite{GP13,PS16} to maintain a $(t+2\eps)$-VC in dynamically changing graphs of $\mathcal{G}$.

Lemma \ref{lazylemma} easily extends to the minimum VC problem and to any approximation   $t \ge 1$, i.e., any $(t+\eps)$-VC
continues to provide a $(t+2\eps)$-VC throughout a long update sequence.
Once the approximation guarantee of that VC, denoted by $VC_{old}$, becomes too poor (i.e., reaches $t+2\eps$), 
we shall \emph{amplify} it (i.e., reduce it back to $t+\eps$) by computing a $(t+\eps)$-VC within time close to linear in $|VC_{low}|$.  
Having done that, we can then re-use the new amplified VC, denoted by ${VC}_{new}$, throughout the subsequent $\eps \cdot |{VC}_{new}|$ update steps, and repeat.

The computation of the new amplified cover $VC_{new}$ employs the old cover $VC_{old}$ and our VC sparsifier from Section \ref{s32} as follows.
Recall that the vertex set $V_{high}$ is the validating set of our VC sparsifier.
It is straightforward to maintain all vertices in $V_{high}$ dynamically with $O(\alpha / \eps)$ worst-case update time;
we initialize ${VC}_{new}$ as $V_{high}$.
We next need to compute the subgraph $G_{low}$ induced on the vertices $V_{low}$ of degree $< \Delta = O(\alpha / \eps)$.
This can be done by simply adding, for each vertex of ${VC}_{old}$ of degree $< \Delta$, all its adjacent edges to vertices of degree $< \Delta$.
Although  some vertices of $V_{low}$ may not belong to $VC_{old}$, we must have added all edges of $G_{low}$, as $VC_{old}$ is a VC for $G$.
Clearly, the number $|V_{low}|$ of vertices in $G_{low}$, as well as the time needed to compute it, are bounded by $O(|{VC}_{old}| \cdot \alpha / \eps)$. 
We proceed by computing a $t$-VC for $G_{low}$, denoted by $VC^t_{low}$; the runtime of this static computation is $T(|V_{low}|)$.
Finally, we add all vertices of $VC^t_{low}$ to ${VC}_{new}$, thus we have ${VC}_{new} = VC^t_{low} \cup V_{high}$. 
By Theorem \ref{vcspa},  ${VC}_{new}$ is a $(t+\eps)$-VC for the entire graph $G$.

Observe that the overall runtime of computing ${VC}_{new}$
is bounded by $O(|{VC}_{old}| \cdot \alpha / \eps) + T(|V_{low}|) \le T(O(|{VC}_{old}| \cdot \alpha / \eps))$.
Since this runtime is amortized over $\Theta(\eps \cdot |{VC}_{old}|)$ update steps, the amortized update time is bounded by 
$T(O(|{VC}_{old}| \cdot \alpha / \eps))/ \Theta(\eps \cdot |{VC}_{old}|)$, which is no greater than
$\frac{T(n)}{O((n / \alpha) \cdot \eps^2)}$.
Note  that the amortized number of changes to the VC is also bounded by $\frac{T(n)}{O((n / \alpha) \cdot \eps^2)}$.
Moreover, one can easily translate the amortized update time into the same (up to a constant factor) worst-case update time, as shown in \cite{GP13}.

Summarizing, we have proved the following result.
\begin{theorem} \label{thmm}
Fix arbitrary $t \ge 1, \eps > 0$.
If a $t$-VC can be computed in time $T(n)$ in an arbitrary (sub)family of $n$-vertex graphs with arboricity bounded by $\alpha$ that is closed under edge removals,
then a $(t+\eps)$-VC can be maintained with a worst-case update time of $\frac{T(n)}{O((n / \alpha) \cdot \eps^2)}$.  
Moreover, the amortized number of changes to the VC is also bounded by $\frac{T(n)}{O((n / \alpha) \cdot \eps^2)}$.
\end{theorem}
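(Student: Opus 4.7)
The plan is to lift the lazy rebuilding scheme of \cite{GP13,PS16} (originally for maximum matching) to the minimum VC problem, using our VC sparsifier from Theorem \ref{vcspa} as the central tool for the periodic recomputation. The scheme rests on two ingredients: a stability property of the VC under a short burst of updates, and a fast amplification procedure that converts a stale cover into a fresh $(t+\eps)$-cover in time almost linear in the cover's size.

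First I would establish the stability analogue of Lemma \ref{lazylemma} for VC: since a single edge insertion/deletion changes the minimum VC size by at most $1$, if $VC_{new}$ is a $(t+\eps)$-VC of $G_i$ of size $s$, then after $\eps \cdot s$ updates, adding to $VC_{new}$ the at most $\eps s$ endpoints of newly inserted edges (whose other endpoint is not already in the cover) yields a valid cover that is still a $(t + 2\eps)$-VC of the current graph. Thus we may safely reuse an amplified cover throughout $\Theta(\eps \cdot |VC_{new}|)$ update steps. This handles a single ``phase.''

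Next I would describe the amplification step. At the end of a phase we hold a valid cover $VC_{old}$ of quality $(t+2\eps)$. Set $\Delta = O(\alpha/\eps)$ as in Section \ref{s32}; since every edge has at least one endpoint in $VC_{old}$, every edge incident to $V_{low}$ within $G_{low}$ can be enumerated by scanning, for each $v \in VC_{old}$ with $\deg(v) < \Delta$, all of $v$'s neighbors. This constructs $G_{low}$ and identifies $V_{high}$ in time $O(|VC_{old}| \cdot \alpha/\eps)$, and yields $|V_{low}| = O(|VC_{old}| \cdot \alpha/\eps)$. Running the static $t$-VC algorithm on $G_{low}$ costs $T(|V_{low}|)$, and by Theorem \ref{vcspa} the set $VC_{new} := VC^t_{low} \cup V_{high}$ is a $(t+\eps)$-VC of $G$. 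The set $V_{high}$ itself can be maintained incrementally under insertions and deletions in $O(\alpha/\eps)$ time per update by flipping the status of affected vertices when their degree crosses $\Delta$.

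Finally I would amortize: the total cost of a phase is $O(|VC_{old}|\cdot \alpha/\eps) + T(O(|VC_{old}| \cdot \alpha/\eps)) \le T(O(|VC_{old}| \cdot \alpha/\eps))$ (assuming $T$ is at least linear, which is without loss of generality), spread over $\Theta(\eps \cdot |VC_{old}|)$ updates, giving amortized update time $T(n)/\Omega((n/\alpha)\eps^2)$; the amortized recourse to the cover is bounded identically. To obtain a worst-case bound, I would use the standard de-amortization from \cite{GP13}: start building the next cover in the background as soon as a phase begins, spending $O(T(n)/((n/\alpha)\eps^2))$ work per update step so that the new cover is ready before the old one becomes too stale, and simultaneously replay on the partially built structure the insertions/deletions that occur during construction. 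The one step that needs care is ensuring that the background-rebuild still sees a graph on which the sparsifier argument applies, and that degree-crossings during the rebuild are routed into both the active cover and the incipient one; this bookkeeping is the main technical obstacle, but is analogous to the matching case in \cite{GP13,PS16} and introduces only constant-factor overhead.
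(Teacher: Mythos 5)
Your proposal follows the paper's proof essentially verbatim: the same lazy rebuild scheme of \cite{GP13,PS16}, the same stability argument (minimum-VC size changes by at most $1$ per update), the same amplification step that builds $G_{low}$ from the stale cover $VC_{old}$ in time $O(|VC_{old}|\cdot\alpha/\eps)$ and invokes the static $t$-VC algorithm plus Theorem~\ref{vcspa}, the same amortization over $\Theta(\eps\cdot|VC_{old}|)$ steps, and the same appeal to \cite{GP13} for de-amortization. You are slightly more explicit than the paper about maintaining validity under insertions (adding endpoints of new uncovered edges into the working cover), which is a small but welcome clarification rather than a different route.
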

{\bf Remarks.}
\begin{enumerate}
\item For planar graphs, one can compute a $(1+\eps)$-VC in time $O(n)$, for any constant $\eps > 0$ \cite{Baker94}.
By Theorem \ref{thmm}, we can   maintain a $(1+\eps)$-VC with a constant  worst-case update time for any constant $\eps > 0$.
\item  For the family of graphs with arboricity bounded by $\alpha$, one can compute a $(2 - \frac{2}{\beta+1})$-VC in time $O(n^{3/2} \cdot \alpha)$, where $\beta$ is the average degree in some (carefully computed) subgraph, and is thus bounded by $2\alpha$ \cite{Hochbaum83}.
By Theorem \ref{thmm}, we can maintain a VC with an approximation of roughly $2 - \frac{1}{\alpha}$ and a worst-case update time of $O(\sqrt{n} \cdot \alpha^{2})$.
\end{enumerate}

Our matching sparsifier from Section \ref{s31} can be used to simplify the algorithms of \cite{PS16} and their analysis.
First, as mentioned, \cite{PS16} used a tricky argument to maintain a $(2+\eps)$-VC with $O(\alpha/\eps)$ update time.
An alternative simpler approach is to maintain a maximal matching on top of our bounded degree matching sparsifier, which can be done naively with update time linear in the
degree of the sparsifier, namely, $O(\alpha/\eps)$. By Theorem \ref{amaximalpf}, this yields an $\eps$-maximal matching, 
which is translated into an $(2+\eps)$-VC in the obvious way.
Second, as explained in \cite{PS16}, the lazy scheme of \cite{GP13} is inherently non-local.  
Since local algorithms are advantageous, \cite{PS16} also devised a local algorithm for maintaining a $(1+\eps)$-maximum matching, which is quite intricate.
An alternative simpler approach is to maintain a $(1+\eps)$-maximum matching on top of our bounded degree matching sparsifier,
by  dynamically excluding augmenting paths of length $O(1/\eps)$.
By Theorem \ref{core}, this yields an $(1+O(\eps))$-maximum matching, and the update time of the resulting algorithm is $(\alpha/\eps)^{O(1/\eps)}$, just as in \cite{PS16}.

\subsubsection{The dynamic approximate maximum IS problem} \label{ISS}
The size of the maximum IS changes by at most 1 following each update, hence we can apply the lazy scheme of \cite{GP13,PS16} also for this problem.
Notice, however, that there is no need to compute a sparsifier here, since the maximum IS is of size at least $n/(\beta+1)$ by Turan's theorem, where $\beta$ is the average degree in the graph. Hence, one can simply compute an approximate maximum IS on the entire graph, and amortize the cost of this static computation over $\Theta(\eps \cdot (n/\beta))$ update steps.
Consequently, using the lazy scheme, we get a simple reduction from the dynamic to the static case:
The update time (both amortized and worst-case) for maintaining a $(t+\eps)$-IS is smaller than the static time complexity of computing a $t$-IS by a factor of $\Omega((n / \beta) \cdot \eps)$,
for any $t \ge 1$ and $\eps > 0$. 
\vspace{8pt}
\\{\bf Remarks.}
\begin{enumerate}
\item For planar graphs, one can compute a $(1+\eps)$-IS in time $O(n)$, for any constant $\eps > 0$ \cite{Baker94}.
We can thus  maintain a $(1+\eps)$-IS with a constant  worst-case update time for any constant $\eps > 0$.
\item  For graphs with average degree bounded by $\beta$, one can compute a $(k+1)/2$-IS in time $O(n^{3/2} \cdot \beta)$ \cite{Hochbaum83}.
We can thus maintain an IS with an approximation of roughly $(k+1)/2$ and a worst-case update time of $O(\sqrt{n} \cdot \beta^{2})$.
\end{enumerate}

\subsection{Distributed networks} \label{s43}
We consider the standard $\mathcal{LOCAL}$ and $\mathcal{CONGEST}$ models of communication (cf.~\cite{PelB00}), which are standard distributed computing models capturing the essence of spatial locality and congestion.
All processors wake up simultaneously, and computation proceeds in fault-free synchronous rounds during which every processor exchanges messages of either unbounded size (in the $\mathcal{LOCAL}$ model) or of $O(\log n)$-bit size (in the $\mathcal{CONGEST}$ model).
It is easy to see that our sparsification algorithms can be implemented in distributed networks using a single communication round, even using $O(1)$-bit messages,
during which each processor sends messages along at most $\Delta$ of its adjacent edges, where $\Delta$ is the degree bound of the sparsifier.
Hence, the results mentioned in Section \ref{s14} hold w.r.t.\ both the
$\mathcal{LOCAL}$ and the $\mathcal{CONGEST}$ models of communication.
In this way we provide a clean and simple reduction from either bounded arboricity graphs (for the distributed approximate maximum matching and minimum VC problems) or from bounded average degree graphs (for the distributed approximate maximum IS problem) to bounded degree graphs.

For the distributed approximate VC problem, \cite{BCS16} showed that a $(2+\eps)$-VC can be computed in $O(\log \Delta / (\eps \log\log \Delta))$ rounds,
where $\Delta$ is the maximum degree in the graph.
We can plug our reduction to extend the result of \cite{BCS16} to graphs of arboricity bounded by $\alpha$.
\begin{theorem}
For any graph of arboricity bounded by $\alpha$ and any $\eps > 0$,
there is a distributed algorithm for computing a $(2+\eps)$-VC in $O(\log (\alpha / \eps) / (\eps \log \log (\alpha / \eps)))$ rounds.
\end{theorem}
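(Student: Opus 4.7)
The plan is to compose the VC sparsifier of Section \ref{s32} with the algorithm of \cite{BCS16} in a black-box fashion, which is the generic ``extend from bounded-degree to bounded-arboricity'' strategy already outlined in Section \ref{s43} for this paper's sparsifiers. First I would pick a scaling parameter $\eps' = \Theta(\eps)$ (say $\eps' = \eps/3$) and instantiate the VC sparsifier of Section \ref{s32} with slack $\eps'$, so that $\Delta = (1/\eps' + 1) \cdot 2\alpha = O(\alpha/\eps)$ and the sparsifier is the pair $(G_{low}, V_{high})$ with $G_{low}$ of maximum degree less than $\Delta$.

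Next I would observe that this sparsifier can be produced in a single communication round in the $\mathcal{LOCAL}$/$\mathcal{CONGEST}$ model: every processor knows its own degree, so it decides locally whether it belongs to $V^{\Delta}_{high}$ or $V^{\Delta}_{low}$, and in one round it notifies its neighbors of this $1$-bit classification. After that round, each processor knows which of its incident edges belong to $G_{low}$ (those with both endpoints in $V_{low}$). Each vertex of $V_{high}$ immediately joins the output vertex cover.

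The main work is then done on $G_{low}$, which has maximum degree at most $\Delta = O(\alpha/\eps)$: I would invoke the algorithm of \cite{BCS16} on $G_{low}$ with its own accuracy parameter set to $\eps'' = \Theta(\eps)$ to obtain a $(2+\eps'')$-VC of $G_{low}$ in $O(\log \Delta / (\eps'' \log\log \Delta)) = O(\log(\alpha/\eps) / (\eps \log\log(\alpha/\eps)))$ rounds. Adding this set to $V_{high}$ and applying Theorem \ref{vcspa} with $t = 2+\eps''$ yields a $(2 + \eps'' + \eps')$-VC for $G$, which is a $(2+\eps)$-VC after the choice $\eps',\eps'' = \eps/3$.

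Summing the costs, the one round used to build $(G_{low},V_{high})$ is dominated by the $O(\log(\alpha/\eps) / (\eps \log\log(\alpha/\eps)))$ rounds used by \cite{BCS16} on $G_{low}$, giving the stated bound. There is essentially no technical obstacle here; the only point requiring a small amount of care is that the vertices of $V_{high}$ appear in both the sparsifier's validating set and in the input graph, so one must confirm that the \cite{BCS16} algorithm, when run on $G_{low}$, can be executed by the underlying processors using only edges of $G_{low}$ (which they have just identified) and otherwise ignoring high-degree neighbors; this is immediate because $G_{low}$ is an induced subgraph and every vertex knows which of its incident edges lie in it after the initial single-round exchange.
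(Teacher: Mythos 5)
Your proposal is correct and matches the paper's intended argument: the paper states the theorem as an immediate corollary of running \cite{BCS16} on the one-round-computable VC sparsifier $(G_{low}, V_{high})$ and invoking Theorem~\ref{vcspa}, which is exactly what you spell out (the paper just leaves the $\eps$-rescaling implicit). Your added remark that each vertex learns its $G_{low}$-incident edges after a single $1$-bit exchange, so the black-box invocation of \cite{BCS16} is well defined, is the right observation and is the same justification the paper gives in the preamble of Section~\ref{s43}.
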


As mentioned in Section \ref{s14}, for the distributed approximate maximum matching problem, a reduction from bounded arboricity graphs to bounded degree graphs was already given in \cite{CHS09}.
The reduction of \cite{CHS09} starts by computing carefully chosen vertex sets in the graph, using which a bounded degree subgraph is computed,
in $O(1)$ communication rounds. A distributed approximate matching algorithm is then run on that subgraph.
Based on the matching returned as output to that algorithm, another carefully chosen bounded degree subgraph is computed, in $O(1)$ more communication rounds.
A distributed approximate matching algorithm is then run on the new subgraph, and the final matching is the union of the first matching and the second.
Our reduction is obtained as an immediate corollary of our matching sparsifier from Section \ref{s31}, and it has several advantages over the one of \cite{CHS09}. First and foremost, our reduction is much simpler.
Second, it requires a single communication round, whereas the number of rounds in the reduction of \cite{CHS09} depends on the time required to compute an approximate matching.
In particular, throughout the computation of our sparsifier, the \emph{load} (as well as node congestion) on all processors is low, since any vertex sends messages only along $\Delta$ of its adjacent edges is a single round, where $\Delta$ is the degree bound of the sparsifier.
Moreover, the load on processors will remain low by definition throughout the subsequent run of any distributed approximate matching algorithm,
 since that algorithm is run on a bounded degree subgraph.
 In contrast, the computation of the subgraphs in \cite{CHS09} involves the exchange of messages between high degree vertices, and this ``high-load'' message exchange proceeds throughout $O(1)$ rounds;
then a distributed approximate matching algorithm is run on a bounded degree subgraph, which may require many rounds to complete, and later another ``high-load'' message exchange proceeds throughout $O(1)$ more rounds.
Third, the degree of our matching sparsifier is at most $O(\alpha/\eps)$, whereas the degree bound of the subgraphs of \cite{CHS09} is $O(\alpha/\eps^2)$, i.e., there is a gap of factor $1/\eps$,
and this gap may be amplified significantly in applications where the runtime of the distributed algorithm depends exponentially on the maximum degree.
In particular, \cite{EMR15} devised a distributed algorithm for computing a $(1+\eps)$-maximum matching in $\Delta^{O(1/\eps)} + O(\eps^{-2}) \cdot \log^* n$
rounds, for graphs with degree bounded by $\Delta$. Due to our matching sparsifier, we extend the result of \cite{EMR15} to graphs of arboricity bounded by $\alpha$
to get
a $(1+\eps)$-maximum matching in $(\alpha / \eps)^{O(1/\eps)} + O(\eps^{-2}) \cdot \log^* n$ rounds,
which has a better dependence on $\eps$ than if the reduction of \cite{CHS09} were to be used.





\end{document}